\title{Partitioning graphs into induced subgraphs\footnote{Research supported by the CE-ITI grant project P202/12/G061 of GA ČR, by GAUK project 1784214 and by the project SVV--2015--260223.}}
\titlerunning{Partition into induced subgraphs} 
\author[1]{Dušan Knop}
\affil[1]{Department of Applied Mathematics, Charles University, Prague, Czech Republic\\
  \texttt{knop@kam.mff.cuni.cz}}
\authorrunning{D. Knop} 
\subjclass{G.2.2 Graph Theory}
\keywords{induced path partition, modular-width, parameterized algorithm}
\DeclareMathOperator{\vc}{{\rm vc}}
\DeclareMathOperator{\mw}{{\rm mw}}
\DeclareMathOperator{\tc}{{\rm tc}}
\DeclareMathOperator{\nd}{{\rm nd}}
\DeclareMathOperator{\pw}{{\rm pw}}
\DeclareMathOperator{\tw}{{\rm tw}}
\DeclareMathOperator{\cw}{{\rm cw}}
\DeclareMathOperator{\poly}{{\rm poly}}
\DeclareMathOperator{\diam}{{\rm diam}}
\newcommand{\N}{\mathbb{N}}
\newcommand{\BigO}{\mathcal{O}}
\newcommand{\FPT}{{\sf FPT}\xspace}
\newcommand{\NP}{{\sf NP}\xspace}
\newcommand{\W}{{\sf W}\xspace}
\newcommand{\XP}{{\sf XP}\xspace}
\newcommand{\coNPpoly}{{\sf coNP/poly}\xspace}
\newcommand{\TrianglePartition}{\textsc{Triangle Partition}\xspace}
\newcommand{\SetPartition}{\textsc{Set Partition}\xspace}
\newcommand{\InducedHPartition}{\textsc{Partition into $H$}\xspace}
\newcommand{\HPartition}{\textsc{Partition into subgraphs of $H$}\xspace}
\newcommand{\InducedHPartitionH}[1]{\textsc{Partition into #1}\xspace}
\newcommand{\EquitableColoring}{\textsc{Equitable Coloring}\xspace}
\newcommand{\PerfectMatching}{\textsc{Perfect Matching}\xspace}
\theoremstyle{definition}\newtheorem{problem}[theorem]{Problem}
\theoremstyle{plain}\newtheorem{proposition}[theorem]{Proposition}
\theoremstyle{plain}\newtheorem{conjecture}[theorem]{Conjecture}
\newenvironment{prob}[1]{\begin{problem}[#1]~\\\begin{tabular}{rp{.8\textwidth}}}{\end{tabular}\end{problem}}
\begin{document}

\maketitle

\begin{abstract}
  We study the \InducedHPartition problem from the parameterized complexity point of view. In the \InducedHPartition problem the task is to partition vertices of a graph $G$ into sets $V_1,V_2,\dots,V_n$ such that the graph $H$ is isomorphic to the subgraph of $G$ induced by each set $V_i$ for $i = 1,2,\dots,n.$ The pattern graph $H$ is fixed.

For the parametrization we consider three distinct structural parameters of the graph $G$---namely the tree-width, the neighborhood diversity, and the modular-width. For the parametrization by the neighborhood diversity we obtain an \FPT algorithm for every graph $H.$
For the parametrization by the tree-width we obtain an \FPT algorithm for every connected graph $H.$ Finally, for the parametrization by the modular-width we derive an \FPT algorithm for every prime graph $H.$

\end{abstract}

\section{Introduction}
We begin with the definition of the \InducedHPartition problem. We will then present the problem in the light of some well-known problems from computation complexity---for example \PerfectMatching or \EquitableColoring---thus demonstrating it as a natural generalisation of these problems. Finally, we give the summary of our results presented in this paper.

\subsection{The \InducedHPartition problem}

For graphs $G = (V,E),H=(W,F)$ with $|V| = |W|\cdot n,$ we say that it is possible to {\em partition $G$ into copies of $H$} if there exist sets $V_1,V_2,\dots,V_n$ such that
\begin{itemize}
  \item $|V_i| = |W|$ for every $i = 1,2,\dots, n,$ 
  \item $\bigcup_{i = 1}^n V_i = V,$ and
  \item $G[V_i] \simeq H$ for every $i = 1,2,\dots, n,$
\end{itemize}
where by $G[V_i]$ we mean the subgraph of $G$ induced by the set of vertices $V_i.$

\begin{prob}{\InducedHPartition}
FIXED: & Template graph $H.$ \\
INPUT: & Graph $G$  with $|G| = n\cdot|H|$ for an integer $n.$  \\
QUESTION: & Is there a partition of $G$ into copies of $H?$ \\
\end{prob}

The complexity of the \InducedHPartition problem has been studied by Hell and Kirkpatrick~\cite{HellKirkpatrick78} and has been proven to be \NP-complete for any fixed graph $H$ with at least $3$ vertices---they have studied the problem under a different name as the {\sc Generalized Matching} problem. There are applications in the printed wiring board design~\cite{Hope72} and code optimization~\cite{BoeschGimpel}.

Some variants of this problem are studied extensively in graph theory. For example when $H = K_2$ the problem \InducedHPartitionH{$K_2$} is the well-known \PerfectMatching problem, which can be solved in polynomial time due to Edmonds~\cite{Edmonds65}---the algorithm works even for the optimization version, when one tries to maximize the number of copies of $K_2$ in $G.$ The characterisation theorem for $H = K_2,$ that is a characterisation of graphs admitting a prefect matching is known due to Tutte~\cite{Tutte47}.

Another frequently studied case of our problem is the \InducedHPartitionH{$K_3$} problem---also known as the \TrianglePartition problem. The \TrianglePartition problem arises as a special case of the \SetPartition problem (also known to be \NP-complete~\cite{garey-johnsonNP}). Gajarský et al.~\cite{GLO13} pointed out that the parametrized complexity of the \TrianglePartition problem parametrized by the tree-width of the input graph was not resolved so far.

The last, but not least, example of a well know problem which can be viewed as a special case of the \InducedHPartition problem is the \EquitableColoring problem. The task is to color the vertices of an $n$ vertex graph with exactly $k$ colors, such that vertices connected by an edge receive different colors and the resulting color classes have equal sizes. It is easy to see that the \EquitableColoring problem is the \InducedHPartition problem with the edgeless graph on $n/k$ vertices.

Very similar application can be found as the so called \emph{$\ell$-bounded vertex colorings}, where the task is to find a coloring of a graph $G$ with prescribed number of colors such each color is used at most $\ell$-times. This problem allows a straightforward reduction to the \EquitableColoring by inserting a suitable number of isolated vertices. The connection between these two problems was also studied from the parameterized complexity point of view~\cite{BodlaenderFomin:EqColTW}---an \XP algorithm is obtained for parametrization by the tree-width. Here a special case of this problem is again equivalent to the \InducedHPartition problem with $H\simeq k\cdot K_1.$ For this problem a polynomial time algorithm is known for trees~\cite{JarvisZhou01}. Upper and lower bounds on the number of colors are known for general graphs~\cite{HansenHK93}.


\subsubsection{Parameterized complexity results}
	When dealing with an \NP-hard problem it is usual to study the problem in the framework of parameterized complexity. While in the previous section we have introduced several problems of the classical complexity, here we give references to parameterized results for these and related problems. 

A similar but more general problem (called the {\sc MSOL Partitioning} problem) was studied by Rao~\cite{Rao07}. Here the task is to partition vertices of the graph $G$ into several sets $A_1,A_2,\dots,A_r$ such that $\varphi(A_i)$ holds for every $i = 1,2,\dots,r,$ where $\varphi(\cdot)$ is an MSO$_1$ formula with one free set variable. If the number $r$ and the clique-width $\cw$ are fixed then the algorithm runs in polynomial time and hence the problem belongs to an \XP class with parametrization by the clique-width.

It was shown by Fellows et al.~\cite{Fellows11:Colorful} that \EquitableColoring problem is \W[1]-hard parameterized by the tree-width of the input graph and the number of colors. This also proves (together with the fact that clique-width of a complement of the graph $G$ is the same as of the graph $G$) that the \InducedHPartition problem is \W[1]-hard even for $H = K_k$ and parametrization by clique-width of a graph.

\subsection{Our contribution}
Our first algorithm is based on the following theorem. Here we would like to point out, that even though the result follows easily, the application is not straightforward. A usual first step in the design of an \FPT algorithm for the tree-width is to use the celebrated theorem of Courcelle~\cite{Courcelle90}. Fortunately, it is possible to use the theorem for the \InducedHPartition problem when $H$ is a connected graph.

\begin{theorem}\label{thm:PartitionMSOFormula}
For any fixed connected graph $H$ the \InducedHPartition problem is expressible as an $MSO_2$ formula.
\end{theorem}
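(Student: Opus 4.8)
The plan is to exhibit, for a fixed connected $H=(W,F)$ with $h:=|W|$, an explicit $MSO_2$ sentence that holds on $G$ exactly when $G$ admits a partition into copies of $H$. Two ingredients are needed: a local test that a given vertex set induces a copy of $H$, and an $MSO_2$ way to speak about the whole, a priori unboundedly large, family of parts at once. The first is immediate, since $H$ is a fixed finite graph, so ``$G[X]\simeq H$'' is first-order in the free set variable $X$: writing $u\sim v$ for adjacency of $u$ and $v$, one may take
\[
  \psi_H(X)\,:=\,\exists x_1\cdots\exists x_h\,\Bigl(\textstyle\bigwedge_{i\neq j}x_i\neq x_j\ \wedge\ \forall y\bigl(y\in X\leftrightarrow\bigvee_i y=x_i\bigr)\ \wedge\ \bigwedge_{\{i,j\}\in F}x_i\sim x_j\ \wedge\ \bigwedge_{\{i,j\}\notin F}\neg(x_i\sim x_j)\Bigr),
\]
which forces $|X|=h$ and says precisely that $x_i\mapsto w_i$ is an isomorphism $G[X]\to H$.

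For the second ingredient I use the hypothesis that $H$ is connected, so every induced subgraph of $G$ isomorphic to $H$ is connected. I quantify over an edge set $T\subseteq E(G)$ and recover the parts as the connected components of the spanning subgraph $(V(G),T)$. ``Being $T$-connected'' and ``being a connected component of $(V(G),T)$'' are the standard $MSO_2$ predicates: $\mathrm{conn}_T(X)$ expresses that $X\neq\emptyset$ and every partition of $X$ into two nonempty parts is crossed by an edge of $T$, and $\mathrm{comp}_T(X)$ is $\mathrm{conn}_T(X)$ conjoined with the requirement that no edge of $T$ has exactly one endpoint in $X$. The sentence I claim equivalent to a yes-instance is
\[
  \Phi\,:=\,\exists T\subseteq E(G)\ \ \forall X\,\bigl(\mathrm{comp}_T(X)\rightarrow\psi_H(X)\bigr).
\]
Correctness is then a short check. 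If $V_1,\dots,V_n$ is a partition of $V(G)$ with each $G[V_i]\simeq H$, put $T:=\bigcup_i E(G[V_i])$; since each $G[V_i]$ is connected and $T$ has no edge between distinct parts, the components of $(V(G),T)$ are exactly $V_1,\dots,V_n$, each satisfying $\psi_H$. Conversely, the components of $(V(G),T)$ always partition $V(G)$, so any $T$ satisfying $\Phi$ yields a partition of $V(G)$ into copies of $H$.

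The single real idea, and the only place connectivity of $H$ is used, is this encoding: it trades the unbounded family of parts for one edge-set quantifier precisely because each part is recoverable as one connected component of $(V(G),T)$, which breaks down as soon as $H$ is disconnected. Everything else is a mechanical assembly of well-known $MSO_2$ connectivity gadgets, and since $H$ is fixed the resulting sentence has constant size, which is exactly what is needed to feed it into Courcelle's theorem as the surrounding discussion indicates; accordingly I do not expect any further obstacle.
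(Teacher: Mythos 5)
Your proposal is correct and takes essentially the same approach as the paper: existentially quantify over the edge set of the solution and use connectedness of $H$ to recover the parts as the connected components of that edge set, each of which is then checked against $H$ by a constant-size first-order test for induced isomorphism. Your rendering via the standard $\mathrm{conn}_T/\mathrm{comp}_T$ predicates is in fact cleaner and more carefully scoped than the paper's version, which reaches a canonical vertex of each part by a walk of length $\diam(H)$ before expanding the copy from it.
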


The proof of this theorem is rather technical and is contained in Section~\ref{sec:connGIPisMSO2}.

As the first algorithm is for graphs with bounded tree-width and though a sparse class of graphs, we also analyze some variants of the \InducedHPartition problem for a particular class of dense graphs. 
Many parameters are suitable for dense graph classes, such as neighborhood diversity and modular-width (we give formal definitions in Section~\ref{sec:StructuralGraphParameters}).

\begin{theorem}\label{thm:PartitionModularWidth}
For any fixed graph $H$ with $\nd(H) = |H|$ and a graph $G$ the \InducedHPartition problem belongs to class the \FPT class when parametrized by $\mw(G).$
\end{theorem}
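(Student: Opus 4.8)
The plan is to run a dynamic program along the modular decomposition of $G$. Recall that a graph of modular-width $k=\mw(G)$ is built from single vertices by disjoint union, complete join, and substitution $P\langle G_1,\dots,G_q\rangle$ of already-built graphs into the vertices of a prime graph $P$ on $q\le k$ vertices; the decomposition tree $T$ is computable in polynomial time, union and join nodes may be taken binary, and prime nodes have at most $k$ children. The proof rests on one structural observation: if $M$ is a module of $G$ and $G[X]\simeq H$, then $X\cap M$ is a module of $G[X]$, and since adjacency between two disjoint modules of $H$ is complete or empty, at a substitution node $G=P\langle G_1,\dots,G_q\rangle$ the nonempty traces of a copy of $H$ on the $G_i$ form a modular partition of $H$ whose quotient is an induced subgraph of $P$. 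So every copy of $H$ either sits inside one $G_i$ or is \emph{split}, leaving inside each $G_i$ it meets an induced copy of some $H[A]$ with $A$ a proper module of $H$; since $H$ is fixed these ``partial patterns'' range over a fixed finite family $\mathcal F$, and a split copy can take at most $c_H\cdot k^{|V(H)|}$ distinct shapes, for a constant $c_H$ depending on $H$. The hypothesis $\nd(H)=|H|$ — equivalently, $H$ has no twins, i.e.\ no module of size $2$ — is what keeps the problem tractable: it excludes the genuinely hard cases such as $H=K_k$ (already \W[1]-hard for clique-width) and, as sketched below, keeps the dynamic-programming state bounded. (If $H$ is moreover prime then $\mathcal F=\{K_1\}$ and a split copy simply uses one vertex in each of $|V(H)|$ modules whose positions induce a copy of $H$ in $P$; this case already carries the main idea.)

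For each node $t$, with module $M_t$, the state records which vectors $\mathbf m\in\N^{\mathcal F}$ are \emph{feasible}, meaning that $G[M_t]$ can be partitioned into copies of $H$ together with exactly $m_F$ vertex-disjoint induced copies of each $F\in\mathcal F$ — the extra copies being what $M_t$ must supply upward to complete split copies formed at ancestor nodes. The state is kept small by a monotonicity argument: if $\mathbf m$ is witnessed by a partition containing at least one full copy of $H$, that copy may be broken along a modular partition of $H$ and its pieces re-exported, so above a bounded point the feasible set is closed under adding the ``decomposition vectors'' of $H$. In the prime case this collapses the feasible set to a single arithmetic progression of loose-vertex counts, so the state is just one integer (the least such count); in general it is a union of boundedly many translated sublattices of $\N^{\mathcal F}$, of size bounded by a function of $|V(H)|$.

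Combining states at union and join nodes is straightforward: since $H$ is fixed, no copy of $H$ and no owed pattern can cross from one child to the other unless it (resp.\ its complement) is disconnected, and the bookkeeping of how the finitely many components may be distributed is routine. The one substantial step is the prime node $t=P\langle G_{t_1},\dots,G_{t_q}\rangle$: writing $y_J$ for the number of split copies using the module-positions $J$ (over the at most $k^{|V(H)|}$ admissible sets $J$) and $d_i$ for the amount $G_{t_i}$ supplies, feasibility at $t$ becomes a system of linear (in)equalities linking the $y_J$, the $d_i$ and the child states. This is an integer linear program of dimension bounded in $k$ and $|V(H)|$ with coefficients at most $|V(G)|$, so the state at $t$ is computable by a fixed-dimension integer-programming algorithm (Lenstra) in time $g(k,|V(H)|)\cdot|V(G)|^{\BigO(1)}$; the instance is a yes-instance exactly when the all-zero vector is feasible at the root.

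The part I expect to be hardest is, first, proving that the state is genuinely bounded in the general case — that the monotonicity/sublattice structure holds even for disconnected $H$ with repeated components, which is where the absence of size-$2$ modules must be used with care rather than merely quoted; and second, the prime-node combine: verifying that feasibility of $G[M_t]$ is faithfully captured by the integer program, in particular that the vertices used by split copies and those exported as loose ends can always be realised simultaneously by one witnessing partition inside each child, so that only the per-child totals enter. The remaining ingredients — the trace lemma for copies of $H$ on modules, the union/join combines, and the ``break one copy'' monotonicity — are routine.
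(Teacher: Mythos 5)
Your skeleton is the paper's skeleton: classify each copy of $H$ at a substitution node by its traces on the blocks, observe that a nonempty trace is a module of $H$ whose quotient must embed into the template graph, pass a small amount of numerical information up the decomposition tree, and decide each substitution node by an integer program whose dimension is bounded in $\mw(G)$ and $|H|$, solved by Lenstra's algorithm. Where you diverge is in the structural lemma. The paper proves (Lemma~\ref{lem:Hembedding}) the much stronger dichotomy that a copy of $H$ either lies entirely inside one block or meets every block in at most one vertex. Granted that, a block only needs to report two integers upward (how many whole copies of $H$ it packed and how many vertices are left over, with the option of unlinking packed copies at the parent), and the ILP has one variable per $|H|$-tuple of template vertices plus one unlinking variable per block --- no family $\mathcal{F}$ of partial patterns, no lattice of feasible vectors, no monotonicity argument. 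Your version is the same algorithm with heavy extra scaffolding to accommodate non-singleton traces.

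That scaffolding is, however, not paranoia: it is needed for the hypothesis as literally written, and this is where the paper itself has a gap. The condition $\nd(H)=|H|$ only forbids twins, and a twin-free $H$ can have a proper module of size at least $4$: take $H$ to be $P_4$ plus a dominating vertex, so that $G=K_2(P_4,K_1)\simeq H$ is a copy of $H$ with four vertices in one block --- Lemma~\ref{lem:Hembedding} is false for this $H$, because its proof silently treats an arbitrary module of $H$ as a neighborhood-diversity type (i.e.\ a homogeneous clique or independent set), which only works for modules of size at most $3$. The paper evidently intends $H$ to be prime in the modular-decomposition sense (cf.\ the abstract), in which case every proper module is a singleton, the dichotomy is immediate, and, as your own parenthetical observes, your machinery collapses to exactly the paper's two-integer bookkeeping. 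So: for the intended (prime) statement your proposal is correct and essentially identical to the paper's proof; for the literal (twin-free) statement neither proof is complete, and the two steps you flag yourself --- bounding the set of feasible vectors $\mathbf{m}$ at a node by a description whose size depends only on $H$, and the faithfulness of the prime-node ILP when children must realise split-copy traces and exported patterns simultaneously --- are precisely the missing pieces; they are plausible but not yet proofs.
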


We derive the result using integer linear programming in a fixed dimension, which can be solved by a parameterized routine~\cite{lenstra83, FT87}. It is worth to mention that even though the condition $\nd(H) = |H|$ may seem very restrictive this class of graphs contains for example paths $P_k$ for $k\ge 4$ and cycles $C_k$ for $k\ge 5.$ Applications of the \InducedHPartition problem with $H$ being a path may be found in code optimization~\cite{BoeschGimpel}.

We say that a graph $H$ is a {\em prime graph} if $H$ fulfills the condition $\nd(H) = |H|.$ The class of primal graphs is thoroughly studied in the context of modular decompositions.

We prove that the \InducedHPartition problem does not have polynomial kernel parameterized by modular-width for any reasonable graph $H.$ More precisely, we prove the following.

\begin{theorem}\label{thm:IHPKernelRefuteMW}
Let $H$ be a graph for which (unparameterized version of) the \InducedHPartition problem is \NP-hard. There is no polynomial kernel routine for the \InducedHPartition problem with parametrization $\mw(G)$ for input graph $G$ unless \NP$\subseteq$ \coNPpoly.
\end{theorem}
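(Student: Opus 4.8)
The plan is to rule out a polynomial kernel by an AND-cross-composition: I would AND-cross-compose the unparameterized (hence, by hypothesis, \NP-hard) \InducedHPartition problem into \InducedHPartition parameterized by $\mw$, and then invoke the theorem — following from Drucker's resolution of the AND-conjecture — that no \NP-hard problem admitting such a composition has a polynomial kernelization unless $\NP\subseteq\coNPpoly$. For the accompanying polynomial equivalence relation I would declare two inputs equivalent when their graphs have the same number of vertices (putting all malformed strings, and all graphs whose order is not a multiple of $|H|$, into one class that maps to a fixed no-instance); this has $\mathcal{O}(n)$ classes on strings of length at most $n$, as required. So it remains, given graphs $G_1,\dots,G_t$ with a common order $m$ divisible by $|H|$, to produce in time $\poly(\sum_i|V(G_i)|)$ a single graph $G^\star$ that is a yes-instance of \InducedHPartition exactly when every $G_i$ is, with $\mw(G^\star)\le\poly(m)$ independently of $t$.

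The construction splits on whether the fixed graph $H$ is connected. If $H$ is connected, take $G^\star:=G_1\cup\dots\cup G_t$, the disjoint union. If $H$ is disconnected, then $\overline H$ is connected (a graph on at least two vertices and its complement cannot both be disconnected), and I would take $G^\star$ to be the \emph{join} of $G_1,\dots,G_t$, i.e.\ the disjoint union together with all edges between vertices of different $G_i$. Either way $|V(G^\star)|=tm$ is a multiple of $|H|$, so $G^\star$ is well formed, and — since neither disjoint union nor join increases modular-width (each becomes a single parallel, resp.\ series, node above modular decompositions of the $G_i$) — we get $\mw(G^\star)\le\max_i\mw(G_i)\le\max_i|V(G_i)|=m$, the bound the framework demands.

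The heart of the argument is the correctness equivalence for $G^\star$. One direction is trivial in both cases: since $G^\star[V(G_i)]=G_i$, a partition of each $V(G_i)$ into induced copies of $H$ is, jointly, a partition of $V(G^\star)$ into induced copies of $H$. For the converse I would show that no induced copy of $H$ in $G^\star$ can meet more than one $G_i$. In the disjoint-union case such a copy would be a disconnected induced subgraph isomorphic to the connected graph $H$ — impossible. In the join case, if an induced copy $C\simeq H$ met $k\ge2$ of the $G_i$, then $V(C)$ would split into $k\ge2$ nonempty parts with all edges present across parts, so $\overline{G^\star[V(C)]}\simeq\overline H$ would be disconnected, contradicting connectivity of $\overline H$. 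Hence every copy lies inside one $G_i$, so the copies contained in $G_i$ must cover precisely $V(G_i)$ and therefore witness that $G_i$ is a yes-instance.

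Assembling these pieces gives an AND-cross-composition from an \NP-hard problem into \InducedHPartition parameterized by $\mw(G)$, and the stated conditional lower bound follows. The one step needing real care is the converse direction for \emph{disconnected} $H$: a plain disjoint union is not an AND-composition there, because a copy of $H$ may straddle several components, and the remedy is precisely the switch to the join, exploiting that $H$ disconnected forces $\overline H$ connected; the remaining ingredients (the equivalence relation, the modular-width bound, the trivial direction) are routine.
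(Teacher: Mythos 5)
Your proposal is correct and matches the paper's argument: the same AND-cross-composition built from an equivalence relation on instance size, the same composed instance, and the same modular-width bound $\mw(G^\star)\le\max_i|V(G_i)|$. The only cosmetic difference is that for disconnected $H$ you take the join of the $G_i$ directly, whereas the paper first passes to complements using $\mw(G)=\mw(\bar G)$ and then takes a disjoint union---the two constructions are complements of one another, and your version additionally spells out the verification (left implicit in the paper) that no induced copy of $H$ can straddle two of the input graphs.
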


By using similar techniques we can prove that for every fixed $H$ the \InducedHPartition problem can be solved efficiently on graphs with bounded neighborhood diversity.
This is an exhaustive extension of Theorem~\ref{thm:PartitionModularWidth}.

\begin{theorem}\label{thm:PartitionNeighborhoodDiversity}
For a fixed graph $H$ there is an \FPT-algorithm for the \InducedHPartition problem parameterized by $\nd(G).$
\end{theorem}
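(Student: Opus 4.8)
The plan is to reduce, exactly as in the proof of Theorem~\ref{thm:PartitionModularWidth}, to an integer linear program of dimension bounded by a function of $\nd(G)$ and then invoke the algorithm of Lenstra~\cite{lenstra83} (or Frank and Tardos~\cite{FT87}). The reason this now works for \emph{every} fixed $H$, with no primality assumption, is that the internal structure of a type of a neighborhood-diversity decomposition is trivial, so a counting argument that fails for general modules goes through here.

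First I would compute a neighborhood-diversity decomposition of $G$ into types $T_1,\dots,T_k$ with $k=\nd(G)$; this can be done in polynomial time. Each $T_i$ is a clique or an independent set, any two vertices of the same type have the same neighborhood outside their type, and between two distinct types either all or no edges are present. The key observation is then that for any $S\subseteq V(G)$ the isomorphism type of $G[S]$ is completely determined by the vector $\bigl(|S\cap T_1|,\dots,|S\cap T_k|\bigr)$: within a type adjacencies are fixed by the clique/independent label, and between types they are fixed by the (now fixed) quotient graph. Accordingly I call a vector $a=(a_1,\dots,a_k)\in\mathbb{Z}_{\ge 0}^k$ with $\sum_i a_i=|H|$ a \emph{valid shape} if the blow-up of the quotient graph obtained by taking $a_i$ vertices of type $i$ is isomorphic to $H$. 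There are at most $(|H|+1)^k$ candidate vectors, and for each one testing isomorphism with $H$ costs time depending only on the fixed graph $H$; hence the set $\mathcal{S}$ of valid shapes is computable in time $f(k)\cdot\mathrm{poly}(|G|)$.

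Next I would set up the integer program with one variable $x_a\ge 0$ per valid shape $a\in\mathcal{S}$, interpreted as the number of copies of $H$ realised with shape $a$, subject to the $k$ equations $\sum_{a\in\mathcal{S}} a_i\,x_a = |T_i|$ for $i=1,\dots,k$. I claim this program is feasible if and only if $G$ admits a partition into copies of $H$: from a partition one reads off the shape of each part to obtain a feasible $x$; conversely, from a feasible $x$ one builds the parts greedily, assigning to a part of shape $a$ a fresh batch of $a_i$ as-yet-unused vertices of type $i$, which never runs out because the totals agree exactly and same-type vertices are interchangeable with respect to induced subgraphs. (The identity $\sum_a x_a = |V(G)|/|H|$ is implied by the equations and need not be imposed.) The program has $|\mathcal{S}|\le(|H|+1)^k$ variables, i.e.\ a number depending only on the parameter, $k$ constraints, coefficients bounded by $|H|$, and right-hand sides at most $|V(G)|$; so Lenstra's algorithm decides feasibility, and hence the original problem, in \FPT time.

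The main technical obstacle is making the "determined by the count vector" claim and the translation between partitions and feasible solutions fully rigorous: one must be precise about the definition of neighborhood diversity (in particular whether same-type vertices are adjacent and how the within-type label combines with the quotient graph) and check that the greedy reconstruction really cannot get stuck, i.e.\ that equality of the totals suffices. Once this bookkeeping is settled, the rest is a direct reuse of the ILP machinery behind Theorem~\ref{thm:PartitionModularWidth}, which is why the statement is an exhaustive extension of that theorem.
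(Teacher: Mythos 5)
Your proposal is correct and follows essentially the same route as the paper: both reduce to an integer linear program whose variables index the bounded number of ways a copy of $H$ can be distributed over the types of $G$ (you count shape vectors, at most $(|H|+1)^{\nd(G)}$; the paper counts type assignments of $H$'s vertices, at most $\nd(G)^{|H|}$), and both then invoke Lenstra's algorithm. Your observation that the isomorphism type of $G[S]$ is determined by the intersection counts with the types, together with the interchangeability of same-type vertices, is exactly the bookkeeping the paper relies on.
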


\section{Preliminaries}
For a graph $G=(V,E)$ we denote by $|G|$ the number of vertices of $G,$ that is $|G| = |V|.$
For a set $U$ we denote by $\binom{U}{2}$ the set of all two element subsets of $U,$ that is $\binom{U}{2} = \{\{u,v\}\colon u,v\in U, u\neq v\}.$
Let $G= (V,E)$ be a graph and let $U\subseteq V$ the graph induced by $U$ is denoted by $G[U]$ and it is the graph $(U,E\cap\binom{U}{2}).$
Let $G = (V,E), H = (W,F)$ be graphs, we say that $G$ is {\em isomorphic} to $H,$ we denote this by $G\simeq H,$ if there exists a bijective mapping $f:V\rightarrow W$ such that $\{u,v\}\in E$ if and only if $\{f(u),f(v)\}\in F.$ 
For a graph $G=(V,E)$ and its two distinct vertices $u,v$ we say that there exists an {\em $uv$ path in $G$} if there exist edges $e_1,e_2, \dots, e_k\in E$ and vertices $v_0 = u, v_1,\dots, v_k = v\in V$ such that $e_i = \{v_{i-1},v_i\}.$
For a set of vertices $U$ we denote the set of adjacent edges as $\delta(U)$ that is the set of edges $\{\{u,v\}\colon u\in U, v\in V\setminus U\}.$
Finally a complement of a graph $G = (V, E)$ is denoted by $\bar{G}$ is a on the same vertex set $V$ with edge set $\binom{V}{2}\setminus E.$
We say that a graph $G$ is {\em connected} if there is a $uv$ path in $G$ for every two distinct vertices of $G.$
For more notation on graphs, we refer reader to a monograph by Diestel~\cite{DiestelGT}.

\subsection{Preliminaries on refuting polynomial kernels}\label{s:kernelRefutePrelim}
Here we present simplified review of a framework used to refute existence of polynomial kernel for a parameterized problem from Chapter 15 of a monograph by Cygan et al.~\cite{CFKLMPPS-FPT}.

In the following we denote by $\Sigma$ a final alphabet, by $\Sigma^*$ we denote the set of all words over $\Sigma$ and by $\Sigma^{\le n}$ we denote the set of all words over $\Sigma$ and length at most $n.$

\begin{definition}[Polynomial equivalence relation]\label{d:polyRelation}
An equivalence relation $\mathcal{R}$ on the set $\Sigma^*$ is called {\em polynomial equivalence relation} if the following conditions are satisfied:
\begin{enumerate}
  \item There exists an algorithm such that, given strings $x,y\in\Sigma^*,$ resolves whether $x\equiv_\mathcal{R} y$ in time polynomial in $|x| + |y|$.
  \item Relation $\mathcal{R}$ restricted to the set $\Sigma^{\le n}$ has at most $p(n)$ equivalence classes for some polynomial $p(\cdot).$
\end{enumerate}
\end{definition}

\begin{definition}[Cross-composition]\label{d:crossComposition}
Let $L\subseteq\Sigma^*$ be an unparameterized language and $Q\subseteq\Sigma^*\times\N$ be a parametrized language. We say that $L$ {\em cross-composes} into $Q$ if there exists a polynomial equivalence relation $\mathcal{R}$ and an algorithm $\mathcal{A},$ called the cross-composition, satisfying the following conditions. 
The algorithm $\mathcal{A}$ takes on input a sequence of strings $x_1,x_2,\dots,x_t\in\Sigma^*$ that are equivalent with respect to $\mathcal{R}$, runs in polynomial time in $\sum_{i = 1}^t |x_i|,$ and outputs one instance $(y,k)\in\Sigma^*\times\N$ such that:
\begin{enumerate}
  \item $k\le p(max_{i = 1}^t |x_i|, \log t)$ for some polynomial $p(\cdot,\cdot),$ and
  \item $(y,k)\in Q$ if and only if $x_i\in L$ for all $i.$
\end{enumerate}
\end{definition}

With this framework, it is possible to refute even stronger reduction techniques---namely polynomial compression---according to the following definition:

\begin{definition}[Polynomial compression]\label{d:polyCompression}
A {\em polynomial compression} of a parameterized language $Q\subseteq\Sigma^*\times\N$ into an unparameterized language $R\subseteq\Sigma^*$ is an algorithm that takes as an input an instance $(y,k)\in\Sigma^*\times\N,$ works in polynomial time in $|x| + k,$ and returns a string $y$ such that:
\begin{enumerate}
  \item $|y|\le p(k)$ for some polynomial $p(\cdot),$ and
  \item $y\in R$ if and only if $(x,k) \in Q.$
\end{enumerate}
\end{definition}

It is possible to refute existence of polynomial kernel using Definitions~\ref{d:polyRelation},\ref{d:crossComposition} and \ref{d:polyCompression} with the help of use of the following theorems and a complexity assumption that is unlikely to hold---namely $\NP\subseteq\coNPpoly$.

\begin{theorem}[\cite{Drucker12}]
Let $L,R\subseteq\Sigma^*$ be two languages. Assume that there exists an AND-distillation of $L$ into $R.$ Then $L\in\coNPpoly.$
\end{theorem}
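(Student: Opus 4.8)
The plan is to establish the equivalent statement $\bar L \in \mathsf{NP}/\mathsf{poly}$, that is, to exhibit a polynomial-length advice sequence $(a_n)_n$ and a polynomial-time predicate under which every string $x \notin L$ has a short certificate of non-membership. Write $A$ for the assumed AND-distillation: on $t$ inputs each of length at most $n$ it runs in polynomial time and outputs a single string of length at most $s = \poly(n)$, with the guarantee that $A(x_1,\dots,x_t) \in R$ if and only if $x_i \in L$ for all $i$. The length bound $s$ does not grow with the number of inputs $t$, and the whole argument will exploit the regime $t \gg s$. The first thing to record is that $A$ behaves as a ``single-NO detector'': as soon as one of its inputs is a NO-instance the output leaves $R$. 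The two obstacles that make turning this into a decision procedure delicate are that (i) $A$ does not reveal which input caused the output to leave $R$, and (ii) $R$ is an arbitrary language, so we may neither test membership in $R$ nor certify a supply of reference YES-instances --- doing the latter is exactly the problem we are trying to solve.

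The device that overcomes both obstacles is information-theoretic. I would feed $A$ a tuple $X_1,\dots,X_t$ of independent inputs drawn from a carefully chosen distribution and set $Y = A(X_1,\dots,X_t)$. Because $Y$ carries at most $s$ bits, $\sum_{i=1}^t I(X_i;Y) \le I(X_1,\dots,X_t;Y) \le H(Y) \le s$, so the average per-coordinate mutual information is at most $s/t$, which is negligible once $t$ is large. By Pinsker's inequality this yields a distributional-stability statement: for a typical coordinate $i$, the conditional law of $Y$ given $X_i = a$ is statistically close to the unconditional law of $Y$ for a typical value $a$. Consequently, planting a fixed instance into coordinate $i$ changes the probability $\Pr[Y \in R]$ only slightly compared with drawing that coordinate afresh. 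Crucially, the arbitrary set $R$ enters the argument only through the scalar test statistic $\Pr[Y \in R]$: I never decide $R$, I only compare this one real number across two planting experiments, and its (approximate) value can be handed over inside the advice.

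With stability in hand I would design a statistical test that separates the two cases. Choosing the reference distribution to be concentrated on YES-instances (these exist as a fixed finite set for each input length and may legitimately be baked into the nonuniform advice by an advice-giver who knows $L$) makes the detector almost one-sided: if the target $x$ lies in $L$ the planted tuple is all-YES and the output tends to stay in $R$, whereas if $x \notin L$ the output is forced out of $R$. Stability guarantees that this gap survives even though the references are only sampled, not individually certified. I would then amplify the gap by running many independent copies of the experiment, pushing the error below $2^{-n}$, and finally fix the sampled references and internal randomness as the advice string by an Adleman-style averaging argument together with a union bound over all $2^n$ inputs of length $n$. Arranging the quantifier so that the forcing event (``the output left $R$'') supplies the existential certificate gives non-membership certificates for $L$, i.e.\ $\coNPpoly$.

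The main obstacle --- and the reason this is genuinely Drucker's theorem rather than a routine adaptation of the Fortnow--Santhanam counting argument --- is simultaneously coping with the AND orientation, the worst-case target $x$, and the arbitrariness of $R$. The simple counting argument that handles an $\NP$-hard $L$ breaks here because we are forbidden from querying $R$ and must instead convert an average-case, two-sided statistical advantage into a worst-case one-sided decision with exactly the $\coNPpoly$ polarity. The delicate part is the choice of the planting distribution and the precise information accounting in the distributional-stability lemma, so that the surviving gap is large enough to amplify and so that the union bound over all length-$n$ inputs still leaves a single advice string that works uniformly. I expect essentially all of the real work to sit in that lemma and in the bookkeeping that transfers its average-case guarantee to every input.
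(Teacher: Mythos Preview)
The paper does not prove this theorem at all: it is quoted as a black-box result from Drucker~\cite{Drucker12}, with no proof sketch or argument supplied, and is immediately combined with the subsequent cross-composition theorem to obtain Theorem~\ref{thm:IHPKernelRefuteMW}. There is therefore nothing in the paper to compare your proposal against.

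That said, your outline is a faithful high-level account of Drucker's actual argument: the information-theoretic compression bound $\sum_i I(X_i;Y)\le H(Y)\le s$, the distributional-stability lemma that converts low mutual information into small total-variation shift under planting, the use of a nonuniformly supplied reference distribution of YES-instances, amplification, and a final Adleman-style derandomisation into advice. You are also right that this is genuinely harder than the Fortnow--Santhanam OR-distillation argument, precisely because one cannot query $R$ and the stability step must preserve a one-sided gap with the correct $\coNPpoly$ polarity. The one place I would flag as needing care in a full write-up is the step where you pass from ``a typical coordinate $i$ and a typical planted value $a$'' to ``every input $x$ of length $n$'': the averaging that produces the stability bound is over both $i$ and $a$, so to handle a worst-case $x$ you need either the isolation trick (randomly fixing all but one coordinate and planting $x$ there) together with a hybrid argument, or the more refined two-source version of the stability lemma that Drucker develops. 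Your sketch gestures at this (``the bookkeeping that transfers its average-case guarantee to every input''), but that transfer is exactly where the proof earns its keep.
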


\begin{theorem}
Assume that an $\NP$-hard language $L$ AND-cross-composes to a parameterized language $Q.$ Then $Q$ does not admit a polynomial compression, unless $\NP\subseteq\coNPpoly.$
\end{theorem}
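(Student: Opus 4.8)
The plan is to rule out a polynomial kernel by exhibiting an AND-cross-composition from an \NP-hard language into the parameterized \InducedHPartition problem, and then to invoke the final theorem of the preliminaries together with the observation that a polynomial kernel is in particular a polynomial compression. As the source language $L$ I take the (unparameterized) \InducedHPartition problem for the same fixed graph $H$, which is \NP-hard by hypothesis. The composed target $Q$ is the same problem parameterized by $\mw(G)$.

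First I would fix the polynomial equivalence relation $\mathcal{R}$: all strings that do not encode a graph $G$ with $|G|$ divisible by $|H|$ are placed in a single junk class (they are trivial no-instances, and for them the composition outputs a fixed no-instance), and the remaining well-formed instances are grouped by their number of vertices $|G_i| = N$. This yields at most $n+2$ classes on $\Sigma^{\le n}$, and guarantees that the instances $G_1,\dots,G_t$ handed to the composition all share the same, $|H|$-divisible size. The composition algorithm $\mathcal{A}$ is then simply the disjoint union $G := G_1 \sqcup G_2 \sqcup \cdots \sqcup G_t$, computable in time polynomial in $\sum_i |G_i|$. Since $N$ is divisible by $|H|$, so is $|G| = tN$, hence $G$ is a well-formed instance.

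The correctness of the AND relies on the connectivity of $H$: as $H$ is connected, every induced copy of $H$ in $G$ is contained in a single connected component, hence inside a single $G_i$. Therefore any partition of $G$ into copies of $H$ restricts to a partition of each $G_i$ into copies of $H$, and conversely the union of partitions of the individual $G_i$ is a partition of $G$; thus $G \in Q$ if and only if $G_i \in L$ for every $i$. For the parameter bound I would use that the disjoint union is a cost-free (parallel) operation for modular-width: the root of the modular decomposition of $G$ is a parallel node whose subtrees are those of the $G_i$, and parallel nodes do not contribute to the modular-width, so $\mw(G) = \max_i \mw(G_i) \le \max_i |G_i| = N \le \max_i |x_i|$. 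In particular $\mw(G)$ is bounded by a polynomial in $\max_i |x_i|$, satisfying the requirement of Definition~\ref{d:crossComposition}. Combining this AND-cross-composition with the final theorem of the preliminaries yields that $Q$ admits no polynomial compression, hence no polynomial kernel, unless $\NP \subseteq \coNPpoly$.

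The step I expect to be the main obstacle is the treatment of a disconnected pattern $H$, for which the clean disjoint-union argument above breaks down: a copy of $H = H_1 \sqcup \cdots \sqcup H_c$ could be assembled from components lying in several different instances $G_i$, so that $G$ might be partitionable even when no single $G_i$ is. Handling this case requires an additional gadget that forces every copy of $H$ to stay within one instance while still keeping the modular-width polynomially bounded (disjoint union and join being the only operations available without paying in the parameter); designing such a locality-enforcing gadget, and verifying that it neither spuriously creates nor destroys partitions, is the delicate part. A secondary technical point is to set up the junk class and the divisibility bookkeeping in $\mathcal{R}$ carefully so that the produced instance always satisfies $|G| = n\cdot |H|$.
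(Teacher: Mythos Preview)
You are not proving the stated theorem. The statement in question is the general framework result that an AND-cross-composition from an \NP-hard language rules out polynomial compression; the paper does not prove this at all---it is quoted (without proof) from the kernelization lower-bound literature and used as a black box. Your proposal instead proves Theorem~\ref{thm:IHPKernelRefuteMW}, the specific kernel lower bound for \InducedHPartition parameterized by modular-width, by \emph{invoking} the stated theorem. So as a proof of the statement you were given, the proposal is simply off target: nothing in it addresses why an AND-cross-composition implies the absence of polynomial compression.

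That said, if one reads your proposal as a proof of Theorem~\ref{thm:IHPKernelRefuteMW}, it is essentially the paper's argument for connected $H$: the same equivalence relation by vertex count, the same disjoint-union composition, the same observation that parallel nodes are free for modular-width so $\mw(G)\le\max_i|G_i|$. The one substantive divergence is your handling of disconnected $H$. You flag it as the main obstacle and propose designing locality-enforcing gadgets; the paper sidesteps the issue entirely with a one-line reduction: since $G[V_i]\simeq H$ iff $\bar G[V_i]\simeq\bar H$ and $\mw(G)=\mw(\bar G)$, one may pass to complements and assume $H$ is connected (the complement of a disconnected graph is connected). This is both simpler and avoids the gadget-design difficulties you anticipate.
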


\section{Preliminaries on structural graph parameters}\label{sec:StructuralGraphParameters}
We give a formal definition of several graph parameters used in this work. For a better acquaint with these parameters, we provide a map of assumed parameters in Figure~\ref{fig:parameterMap}.

One of the most restrictive graph parameters is called the {\em vertex cover number} and is defined as follows.

\begin{definition}[Vertex cover]\label{def:vc}
For a graph $G = (V,E)$ the set $U\subset V$ is called a {\em vertex cover} of $G$ if for every edge $e\in E$ it holds that $e\cap U\neq\emptyset.$

The vertex cover number of a graph, denoted as $\vc(G),$ is the least integer $k$ for which there exists a vertex cover of size $k.$
\end{definition}

We say that the vertex cover number is very restrictive graph parameter, because for a fixed positive integer $k$ the class of graphs with vertex cover number bounded by $k$ does not contain large spectra of graphs (for example some whole classes of graphs). 

\begin{figure}
  \begin{minipage}[c]{0.28\textwidth}
    \begin{tikzpicture}[node distance=1.5cm]
  \tikzstyle{arrowBasic} = [->, >=stealth, very thick]
  \tikzstyle{linDep} = [arrowBasic]
  \tikzstyle{expDep} = [arrowBasic, dashed]

  \node (cw) {$\cw$};
  
  \node[below right of=cw] (tw) {$\tw$};
  \node[below left of=cw] (mw) {$\mw$};
  
  \node[below right of=mw] (nd) {$\nd$};
  \node[below left of=mw] (tc) {$\tc$};
  \node[below of=tw] (pw) {$\pw$};

  \node[below left of=nd] (vc) {$\vc$};

  \draw[expDep] (vc) -- (nd);
  \draw[linDep] (vc) -- (tc);
  \draw[linDep] (vc) -- (pw);

  \draw[linDep] (pw) -- (tw);
  \draw[linDep] (nd) -- (mw);
  \draw[linDep] (tc) -- (mw);
  \draw[expDep] (mw) -- (cw);
  \draw[expDep] (tw) -- (cw);


  \begin{scope}[on background layer]
    \node[fill=yellow!20, fit=(vc)(pw)(tw), rounded corners] {};
  \end{scope}

  \begin{scope}[on background layer]
    \node[fill=orange!30, fit=(nd)(mw)(tc)(cw), rounded corners] {};
  \end{scope}
\end{tikzpicture}
  \end{minipage}\hfill
  \begin{minipage}[b]{0.7\textwidth}
    \caption{
A map of assumed parameters. Full arrow stands for linear upper bounds, while dashed arrow stands for exponential upper bounds.
    } \label{fig:parameterMap}
  \end{minipage}
\end{figure}
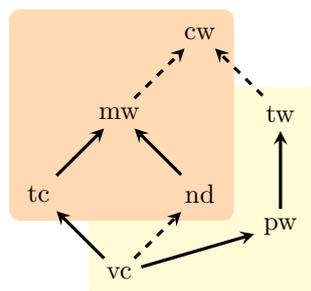

As the vertex cover number is (usually) too restrictive, many authors focused on defining other structural parameters (so as more graphs have small value of the parameter). Three most well-known parameters of this kind are the path-width, the tree-width (introduced by Robertson and Seymour~\cite{RS86:GMII}), and the clique-width (introduced by Courcelle et al.~\cite{CMR98}). Classes of graphs with bounded tree-width (respectively path-width) contain the so called sparse graph classes (i.e. the graph cannot contain too many edges).

There are (more recent) structural graph parameters which also generalize the vertex cover number but in contrary to the tree-width these parameters focus on dense graphs. First, up to our knowledge, of these parameters is the {\em neighborhood diversity} defined by Lampis~\cite{Lampis12}. We denote the neighborhood diversity of a graph $G = (V,E)$ as $\nd(G).$ 

We say that two (distinct) vertices $u,v$ are of the same {\em neighborhood type} if $N(u)\setminus\{v\} = N(v)\setminus\{u\}$ (they share their respective neighborhoods).

\begin{definition}[Neighborhood diversity~\cite{Lampis12}]\label{def:nd}
A graph $G = (V,E)$ has {\em neighborhood diversity} at most $w$ ($\nd(G)\le w$), if there exists a partition of $V$ into at most $w$ sets (we call these sets \emph{types}) such that all the vertices in each set have the same neighborhood type.
\end{definition}

Usually, we use the notion of \emph{type graph}---that is a graph $T_G$ representing the graph $G$ and its neighborhood diversity decomposition in the following way. The vertices of type graph $T_G$ (also called the \emph{template graph}) are the neighborhood types of the graph $G$ and two such vertices are joined by an edge if all the vertices of corresponding types are joined by an edge. Note that any optimal type graph is a prime graph. We would like to point out that it is possible to compute the neighborhood diversity of a graph in linear time~\cite{Lampis12}.

More recently, Ganian~\cite{Ganian11} defined the {\em twin-cover number}. We begin with an auxiliary definition. If two vertices $u,v$ have the same neighborhood type and $e = \{u,v\}$ is an edge of the graph, we say that $e$ is a {\em twin-edge}.

\begin{definition}[Twin-cover number~\cite{Ganian11}]\label{def:tc}
A set of vertices $X\subseteq V$ is a {\em twin-cover} of a graph $G=(V,E)$, if for every edge $e\in E$ either
\begin{enumerate}
  \item $X\cap e\neq\emptyset$, or 
  \item $e$ is a twin-edge.
\end{enumerate}
We say that $G$ has twin-cover number $k$ ($\tc(G) = k$) if the size of a minimum twin-cover of $G$ is $k.$
\end{definition}

Note that the twin-cover can be upper-bounded by the vertex-cover number. As the structure of graphs with bounded twin-cover is very similar to the structure of graphs with bounded vertex-cover number, there is a hope that many of known algorithms for graphs with bounded vertex-cover number can be easily turned into algorithms for graphs with bounded twin-cover.

Both previous approaches are generalized by a modular-width, defined by Gajarský et al.~\cite{GLO13}. Here we deal with graphs created by an algebraic expression that uses the following operations:
\begin{enumerate}
  \item create an isolated vertex,
  \item the disjoint union of two graphs,
that is from graphs $G = (V,E), H = (W, F)$ create a graph $(V\cup W, E\cup F)$,
  \item the complete join of two graphs,
that is from graphs $G = (V,E), H = (W, F)$ create a graph $(V\cup W, E\cup F\cup \{\{v,w\}\colon v\in V, w\in W\})$, note that the edge set of the resulting graph can be also written as $E\cup F\cup V\times W.$
  \item The substitution operation with respect to some (arbitrary) graph $T$ (for an example see Figure~\ref{fig:MWDecomposition}) with vertex set $\{v_1,v_2,\dots,v_n\}$ and graphs $G_1,G_2,\dots,G_n$ created by algebraic expression. The substitution operation, denoted by $T(G_1,G_2, \dots, G_n)$, results in the graph on vertex set $V = V_1\cup V_2\cup\dots\cup V_n$ and edge set
$$E = E_1\cup E_2\cup\dots\cup E_n\cup \bigcup_{\{v_i,v_j\}\in E(G)} V_i\times V_j,$$
where $G_i = (V_i,E_i)$ for all $i = 1,2,\dots,n.$
\end{enumerate}


\begin{figure}
  \begin{tikzpicture}[node distance=.7cm]
  \tikzstyle{small_vertex} = [circle, draw, fill=black, inner sep=1.5pt]
  \tikzstyle{under_small_graph} = [circle, inner sep=.7ex, dashed, draw]
  \tikzstyle{template_graph} = [inner sep=2ex, fill=black!10, rounded corners]
  \tikzstyle{arround_vertex} = [circle, inner sep=2pt, dashed, draw]
  \tikzstyle{old_edge} = [draw, black!40]
  \tikzstyle{tan_line} = [draw, black!70]

  \begin{scope}[local bounding box=left]
  \node[small_vertex] (L1) {};
  \node[small_vertex,above of=L1] (L2) {};
  \node[small_vertex,above of=L2] (L3) {};
  \draw (L1) -- (L2) -- (L3);
  \end{scope}
  \begin{scope}[on background layer]
    \node[under_small_graph, fit = (L1)(L2)(L3)] (lg) {};
  \end{scope}  

  \begin{scope}[local bounding box=middle, shift={($(left.east)+(1.5cm,-.4cm)$)}]
  \node[small_vertex] (M1) {};
  \node[small_vertex, above of=M1] (M2) {};
  \draw (M1) -- (M2);
  \end{scope}
  \begin{scope}[on background layer]
    \node[under_small_graph, fit = (M1)(M2)] (mg) {};
  \end{scope}  

  \begin{scope}[local bounding box=right, shift={($(middle.east)+(1.5cm,-.6cm)$)}]
  \node[small_vertex] (R1) {};
  \node[small_vertex, above of=R1] (R2) {};
  \node[small_vertex, above of=R2] (R3) {};
  \draw (R1) -- (R2);
  \end{scope}
  \begin{scope}[on background layer]
    \node[under_small_graph, fit = (R1)(R2)(R3)] (rg) {};
  \end{scope}  

  \begin{scope}[local bounding box=schema_graph, shift={($(middle.north)+(-1.5cm, 2cm)$)},node distance=1.5cm]
    \node[small_vertex] (T1) {};
    \node[small_vertex, right of=T1] (T2) {};
    \node[small_vertex, right of=T2] (T3) {};
    \draw (T1) -- (T2) -- (T3);
  \end{scope}
  \begin{scope}[on background layer]
    \node[template_graph, fit = (T1)(T2)(T3), label=right:$T$] {};
    \node[arround_vertex, fit=(T1)] (TT1) {};
    \node[arround_vertex, fit=(T2)] (TT2) {};
    \node[arround_vertex, fit=(T3)] (TT3) {};
  \end{scope}  

  \begin{scope}[local bounding box=left2, shift={($(right.west)+(5cm,-.7cm)$)}]
  \node[small_vertex] (LL1) {};
  \node[small_vertex,above of=LL1] (LL2) {};
  \node[small_vertex,above of=LL2] (LL3) {};
  \draw[old_edge] (LL1) -- (LL2) -- (LL3);
  \end{scope}
  \begin{scope}[on background layer]
    \node[under_small_graph, fit = (LL1)(LL2)(LL3)] (l2) {};
  \end{scope}  
  
  \begin{scope}[local bounding box=middle2, shift={($(left2.east)+(1.5cm,-.4cm)$)}]
  \node[small_vertex] (MM1) {};
  \node[small_vertex, above of=MM1] (MM2) {};
  \draw[old_edge] (MM1) -- (MM2);
  \end{scope}
  \begin{scope}[on background layer]
    \node[under_small_graph, fit = (MM1)(MM2)] {};
  \end{scope}

  \begin{scope}[local bounding box=right, shift={($(middle2.east)+(1.5cm,-.6cm)$)}]
  \node[small_vertex] (RR1) {};
  \node[small_vertex, above of=RR1] (RR2) {};
  \node[small_vertex, above of=RR2] (RR3) {};
  \draw[old_edge] (RR1) -- (RR2);
  \end{scope}
  \begin{scope}[on background layer]
    \node[under_small_graph, fit = (RR1)(RR2)(RR3)] {};
  \end{scope}

  \foreach \i in {(LL1),(LL2),(LL3)}
    \foreach \j in {(MM1),(MM2)}
      \draw \j -- \i;

  \foreach \i in {(RR1),(RR2),(RR3)}
    \foreach \j in {(MM1),(MM2)}
      \draw \j -- \i;

  \draw[tan_line] (tangent cs:node=TT1,point={(lg.east)},solution=2) -- (tangent cs:node=lg,point={(TT1.east)});
  \draw[tan_line] (tangent cs:node=TT1,point={(lg.west)},solution=1) -- (tangent cs:node=lg,point={(TT1.west)},solution=2);

  \draw[tan_line] (tangent cs:node=TT2,point={(mg.east)},solution=2) -- (tangent cs:node=mg,point={(TT2.east)});
  \draw[tan_line] (tangent cs:node=TT2,point={(mg.west)},solution=1) -- (tangent cs:node=mg,point={(TT2.west)},solution=2);

  \draw[tan_line] (tangent cs:node=TT3,point={(rg.east)},solution=2) -- (tangent cs:node=rg,point={(TT3.east)});
  \draw[tan_line] (tangent cs:node=TT3,point={(rg.west)},solution=1) -- (tangent cs:node=rg,point={(TT3.west)},solution=2);

  \draw[line width=2.5mm, >=stealth, ->, shorten >=.3cm,shorten <=.3cm] (rg) -- (l2);

\end{tikzpicture}
  \caption{An illustration of the modular-width decomposition of a graph. A schema of a decomposition is depicted in the left part of the picture. In the right part of the picture there is the resulting graph---gray edges represent edges from the previous step of the decomposition (with template graph $T$). The resulting graph has $\nd = 5$ and by the decomposition $\mw = 3.$}
  \label{fig:MWDecomposition}
\end{figure}
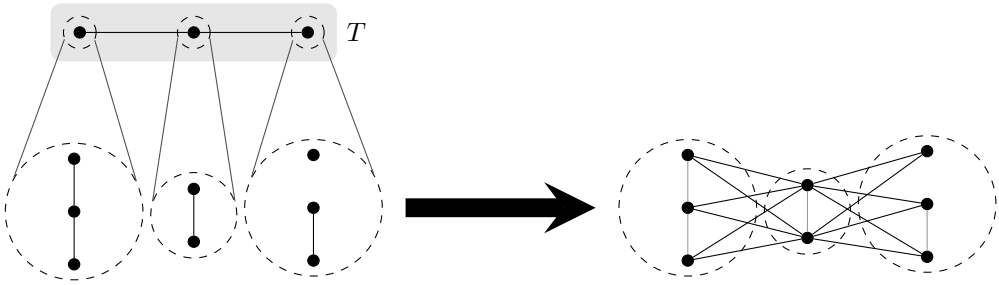

\begin{definition}[Modular-width~\cite{GLO13}]
Let $A$ be an algebraic expression that uses only operations $1$--$4$. The {\em width of expression} $A$ is the maximum number of operands used by any occurrence of operation $4$ in an expression A.

The {\em modular-width} of a graph $G$, denoted as $\mw(G)$, is the least positive integer $m$ such that $G$ can be obtained from such an algebraic expression of width at most $m.$ 
\end{definition}

When a graph $H$ is constructed by the fourth operation, that is $G = T(G_1,G_2,\dots,G_n),$ we call the graph $T$ the {\em template graph} and graphs $G_1,G_2,\dots,G_n$ are referred as {\em blocks}.
An algebraic expression of width $\mw(G)$ can be computed in linear time~\cite{TCHP08}.


\section{\InducedHPartition on graphs with bounded modular-width}
\label{sec:IHPforMWGraphs}
In this section we give a proof of the Theorem~\ref{thm:PartitionModularWidth}. We begin with a technical Lemma~\ref{lem:Hembedding} that demonstrate the limited possibilities of embedding the (fixed) graph $H$ into the graph $G$ with bounded modular-width. This exploits a close connection of the parameters neighborhood diversity and modular-width.

We then formulate the problem as a mixed integer linear problem where we can bound the number of integer variables by a function in the modular-width $\mw(G)$ of the input graph.

\subsection{Embedding of $H$ inside $G$}
The following lemma shows that the restriction for the neighborhood diversity $\nd(H) = |H|$ leads to only two possibilities of an embedding of $H$ on a particular level of a modular decomposition of the graph $G.$ 

\begin{lemma}\label{lem:Hembedding}
Let $G = T(G_1,G_2,\dots,G_n)$ be a graph and let $H$ be a graph with $|H| = \nd(H).$ For an induced subgraph $H$ in $G$ holds either 
\begin{enumerate}
  \item $H\subseteq G_i$ for some $i \in\{1,2,\dots,n\},$ or
  \item $H$ contains at most one vertex in every $G_i$ for $i = 1,2,\dots, n.$
\end{enumerate}
\end{lemma}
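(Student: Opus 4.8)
The plan is to argue by contradiction, assuming that neither alternative holds, and to extract a pair of vertices of $H$ that must have the same neighborhood type — contradicting $\nd(H) = |H|$, which says every vertex of $H$ is its own neighborhood type. So suppose $H$ is an induced subgraph of $G = T(G_1,\dots,G_n)$ that is not contained in any single block $G_i$, and that some block $G_j$ contains at least two vertices of $H$. Since $H$ is not contained in $G_j$, there is a block $G_\ell$ with $\ell \neq j$ that also meets $H$; fix a vertex $w \in V(H) \cap V(G_\ell)$ and two distinct vertices $u_1, u_2 \in V(H) \cap V(G_j)$.

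**The key observation about adjacency across blocks.**

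First I would record the crucial structural fact: in $G = T(G_1,\dots,G_n)$, whether a vertex of $V_i$ is adjacent to a vertex of $V_k$ (for $i \neq k$) depends only on the pair $(i,k)$ — namely on whether $\{v_i,v_k\} \in E(T)$ — and not on the individual vertices. Consequently, for any vertex $x$ lying outside $G_j$, $x$ is adjacent to $u_1$ if and only if $x$ is adjacent to $u_2$. In particular this applies to every vertex of $V(H)$ that lies outside $G_j$; by the hypothesis there is at least one such vertex (namely $w$), so $V(H)\setminus V(G_j)$ is nonempty. Thus $u_1$ and $u_2$ have exactly the same neighbors among $V(H)\setminus V(G_j)$.

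**Handling the neighbors inside the block.**

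The remaining task is to also equalize the neighborhoods of $u_1$ and $u_2$ inside $V(H)\cap V(G_j)$; this is the step I expect to be the main obstacle, since within $G_j$ the vertices $u_1,u_2$ need not be twins at all. The resolution is that we are free to choose $u_1,u_2$ more cleverly. Consider the graph $H' := H[\,V(H)\cap V(G_j)\,]$, which has at least two vertices, together with the "outside" part; since every vertex of $H$ outside $G_j$ sees all of $V(H)\cap V(G_j)$ or none of it in a uniform way, the neighborhood type of a vertex $a \in V(H)\cap V(G_j)$, as a vertex of $H$, is determined by (i) its neighborhood within $H'$ and (ii) the fixed "global" attachment pattern shared by all of $V(H)\cap V(G_j)$. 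Hence, if $H'$ itself had two vertices $u_1,u_2$ with the same neighborhood type in $H'$, these would have the same neighborhood type in $H$, giving the contradiction. If on the other hand $\nd(H') = |H'|$, then $H'$ is prime and in particular $|H'| \neq 2$ (a prime graph on two vertices does not exist, since $K_2$ and $\overline{K_2}$ both have their two vertices as twins), so $|H'| \geq 3$; but a prime graph on at least three vertices contains a vertex $v$ which, together with the module structure, forces the existence of the required twin pair — more directly, one can simply reduce to an earlier level of the decomposition.

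**A cleaner route via induction on the decomposition.**

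Rather than wrestle with $H'$ by hand, I would instead prove the lemma by structural induction on the algebraic expression producing $G$, applied at the top level $G = T(G_1,\dots,G_n)$. If $V(H)$ meets two or more blocks but is not split so that each block gets at most one vertex, pick the block $G_j$ with $|V(H)\cap V(G_j)| \geq 2$. The vertices of $V(H)\cap V(G_j)$ all have identical adjacency to everything outside $G_j$, and $V(H)$ has a vertex outside $G_j$; therefore two vertices $u_1,u_2$ of $V(H)\cap V(G_j)$ are non-twins in $H$ only because of edges inside $G_j$, i.e. $H[V(H)\cap V(G_j)]$ must have $\nd = |V(H)\cap V(G_j)|$ as well — but now recurse into $G_j$ with the subgraph $H[V(H)\cap V(G_j)]$, whose size is strictly smaller than $|H|$ unless $V(H)\subseteq V(G_j)$, which is alternative~1. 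The base cases — a single vertex, a disjoint union, a complete join — are immediate: in a disjoint union or complete join of $G'$ and $G''$, any vertex $x\in V(G')$ and $y\in V(G'')$ with $V(H)$ meeting both sides have, respectively, no or all edges to the other side, so two vertices of $V(H)$ on the same side that are non-twins in $H$ are non-twins already in $H$ restricted to that side, pushing the problem strictly downward. The induction bottoms out at single vertices, where both alternatives hold trivially. The one genuinely delicate point, which I would spell out carefully, is the claim that a vertex of $V(H)$ outside $G_j$ really exists whenever we are not in case~1 and case~2 fails — but that is exactly the definition of the failure of case~1, so it holds.
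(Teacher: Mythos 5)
Your instinct at the step you yourself flag as ``the main obstacle'' is exactly right, but neither of your two attempts to get past it works, and the gap is genuine. The set $W_j = V(H)\cap V(G_j)$ is a \emph{module} of $H$ (its vertices have identical neighbourhoods outside $W_j$), but a module is not the same thing as a class of twins: the hypothesis $\nd(H)=|H|$ forbids twins, not nontrivial modules. Your claim that a graph $H'$ with $\nd(H')=|H'|$ and $|H'|\ge 3$ must contain the required twin pair is false---$P_4$ has $\nd(P_4)=4$ and no twins. The inductive route fares no better: recursing into $G_j$ with $H'=H[W_j]$ can simply terminate with $H'$ spread one vertex per sub-block at the next level of $G_j$'s decomposition, which satisfies alternative~2 \emph{for $H'$ inside $G_j$} and yields no contradiction about $H$ and $G$ at the top level; you never say what contradiction the recursion is supposed to bottom out in. In fact no argument can close the gap, because the lemma is false as literally stated: take $T=K_2$, $G_1=P_4$, $G_2=K_1$, so that $G=T(G_1,G_2)$ is the gem ($P_4$ completely joined to one vertex), and take $H=G$. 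Then $\nd(H)=5=|H|$, yet $H$ is an induced subgraph of $G$ contained in no single block and placing four vertices into $G_1$.

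For comparison, the paper's own proof makes precisely the leap you were (rightly) reluctant to make: it asserts that the restriction $T'$ of $T$ to the blocks met by $H$ ``is a template graph for a neighbourhood diversity decomposition of $H$,'' which would require each $W_i$ to consist of pairwise twins, whereas the substitution operation only guarantees that each $W_i$ is a module. The statement (and both proofs) are repaired by strengthening the hypothesis to ``$H$ has no nontrivial modules,'' i.e.\ primality in the modular-decomposition sense, which the surrounding text apparently intends and which still covers $P_k$ for $k\ge4$ and $C_k$ for $k\ge5$: then $2\le|W_j|<|H|$ immediately exhibits a forbidden module. Under the literal hypothesis $\nd(H)=|H|$ your argument goes through only when $|W_j|=2$ (a two-element module is automatically a twin pair) or, more generally, when $H[W_j]$ itself contains twins.
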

\begin{proof}
Assume that $H\nsubseteq G_i$ for any $i.$ Now if there are at least two vertices of $H$ in some $G_i,$ we claim that $\nd(H) < |H|$ but this finishes the proof as it contradicts assumptions of the lemma.

Let us rearrange vertices of a template graph $T$ (and corresponding graphs $G_i$) so that $H$ contains vertices of graphs $G_1,G_2,\dots,G_k$ (with $k < |H|$). Let $T'\subseteq T$ be the restriction of $T$ to vertices $v_1,v_2,\dots,v_k.$ It is easy to see that now $T'$ is a template graph for neighbourhood diversity decomposition of the graph $H,$ but (as claimed) this shows that $\nd(H)\le k < |H|.$
\end{proof}

Now with Lemma~\ref{lem:Hembedding} the most complex (an in particular, the only important) operation of a modular decomposition (operation 4) it follows that the structure inside graphs $G_1,G_2,\dots,G_n$ is not important when trying to find paths of the cover that contains vertices from more than one graph $G_i.$

So as pointed out by Gajarský et al.~\cite{GLO13} it suffices to design an algorithm for operation 4 only, as it is possible to express operations 2 and 3 using this operation. 

Note that when deciding the \InducedHPartition problem on graph $G$ with $\mw(G) < |H|,$ it follows from Lemma~\ref{lem:Hembedding} that the answer is clearly No---thus this case is trivial. So the task is to design an algorithm when $|H|\le\mw(G).$

\subsection{Mixed integer linear program}

In the following mixed integer linear program for the \InducedHPartition problem on the graph $G = T(G_1,G_2,\dots,G_n)$ the set $\mathcal{S}$ is the set of all $|H|$-tuples of vertices of the graph that form an induced copy of $H$ in $G.$ 
The set $V$ is the vertex set of the graph $G,$ moreover, as graphs $G_1,G_2, \dots, G_n$ correspond to vertices of $G,$ we will denote these as $G_v$ for $v\in V.$
The constants $w_v$ represent the number of vertices of the graph $G_v$ that are not covered by a copies of $H$ found by the previous (recursive) solution.
On the other hand, the constants $p_v$ represent the number of copies of $H$ in the recursive solution.
It may be wise to unlink some previously made $H$'s---this is why we introduce the variable $y_v,$ which expresses exactly this. An example of a situation in which this is necessary is depicted in Figure~\ref{fig:pathsMustBeDecomposed}. The program tries to cover as many vertices as possible---this is done by minimizing the number of uncovered vertices expressed by $r_v$ for a graph $G_v.$

\begin{figure}
  \begin{tikzpicture}[node distance=.7cm]
  \tikzstyle{vertex} = [circle, draw, fill=black, inner sep=1.5pt]
  \tikzstyle{block} = [draw, fill=yellow!30, inner sep=1.5ex, rounded corners]
  \tikzstyle{edge} = [draw,thick]

  \begin{scope}[xshift=2cm]
    \node[vertex] (g1v1) {};
    \foreach \i [remember=\i as \lasti (initially 1)] in {2,...,6} {
      \node[vertex, above of=g1v\lasti] (g1v\i) {};
    }
    \draw[dotted,thick] (g1v1) -- (g1v2) -- (g1v3);
    \draw[dotted,thick] (g1v4) -- (g1v5) -- (g1v6);
  \end{scope}
  \begin{scope}[on background layer]
    \node[block, fit=(g1v1)(g1v2)(g1v3)(g1v4)(g1v5)(g1v6)] (G2) {};
    \node[xshift=-.3cm, yshift=-.1cm] at (G2.north west) {$G_2$};
  \end{scope}

  \begin{scope}
    \node[vertex] (g2v1) {};
    \foreach \i [remember=\i as \lasti (initially 1)] in {2,3,4} {
      \node[vertex, above of=g2v\lasti] (g2v\i) {};
    }
    \draw[dotted,thick] (g2v2) -- (g2v3) -- (g2v4);
  \end{scope}
  \begin{scope}[on background layer]
    \node[block, fit=(g2v1)(g2v2)(g2v3)(g2v4), label=above:$G_1$] {};
  \end{scope}

  \begin{scope}[xshift=6cm,yshift=1.5cm]
    \node[vertex] (g3v1) {};
    \foreach \i [remember=\i as \lasti (initially 1)] in {2,3,4} {
      \node[vertex, above of=g3v\lasti] (g3v\i) {};
    }
  \end{scope}
  \begin{scope}[on background layer]
    \node[block, fit=(g3v1)(g3v2)(g3v3)(g3v4), label=right:$G_5$] {};
  \end{scope}

  \begin{scope}[xshift=4cm]
    \node[vertex] (g4v1) {};
    \node[vertex, above of=g4v1] (g4v2) {};
    \node[vertex, right of=g4v1] (g4v3) {};
    \node[vertex, right of=g4v3] (g4v4) {};
    \node[vertex, right of=g4v4] (g4v5) {};
    \draw[dotted,thick] (g4v3) -- (g4v4) -- (g4v5);
  \end{scope}
  \begin{scope}[on background layer]
    \node[block, fit=(g4v1)(g4v2)(g4v3)(g4v4)(g4v5), label=right:$G_3$] {};
  \end{scope}

  \begin{scope}[xshift=4cm, yshift=3cm]
    \node[vertex] (g5v1) {};
    \node[vertex, above of=g5v1] (g5v2) {};
  \end{scope}
  \begin{scope}[on background layer]
    \node[block, fit=(g5v1)(g5v2), label=below:$G_4$] {};
  \end{scope}

  \draw[edge] (g2v1) -- (g1v1) -- (g4v2);
  \draw[edge] (g2v2) -- (g1v3) -- (g4v1);
  \draw[edge] (g2v3) -- (g1v4) -- (g3v1);
  \draw[edge] (g2v4) -- (g1v2) -- (g3v2);
  \draw[edge] (g3v3) -- (g5v1) -- (g1v5);
  \draw[edge] (g3v4) -- (g5v2) -- (g1v6);

  \begin{scope}[xshift=9.5cm,yshift=2cm]
    \node[vertex,label=above:$v_1$] (v1) {}; 
    \node[vertex,right=of v1,label=above:$v_2$] (v2) {}; 
    \node[vertex,right=of v2,label=below:$v_3$] (v3) {};
    \node[vertex,above=of v3,label=left:$v_4$] (v4) {};
    \node[vertex,right=of v4,label=below:$v_5$] (v5) {};
  
    \draw[ultra thick] (v1) -- (v2);
    \draw[ultra thick] (v5) -- (v2);
    \draw[ultra thick] (v3) -- (v2);
    \draw[ultra thick] (v2) -- (v4) -- (v5);

    \node[below=of v2] {The template graph};
  \end{scope}
\end{tikzpicture}
  \caption{An example of a graph created by a modular decomposition. Template graph is given on the right. Paths inside blocks (precomputed by an induction) are showed as dotted, while an optimal solution for this graph is drawn with solid edges. The graph shows that it is essential to allow unlinking of previously constructed paths.}
  \label{fig:pathsMustBeDecomposed}
\end{figure}
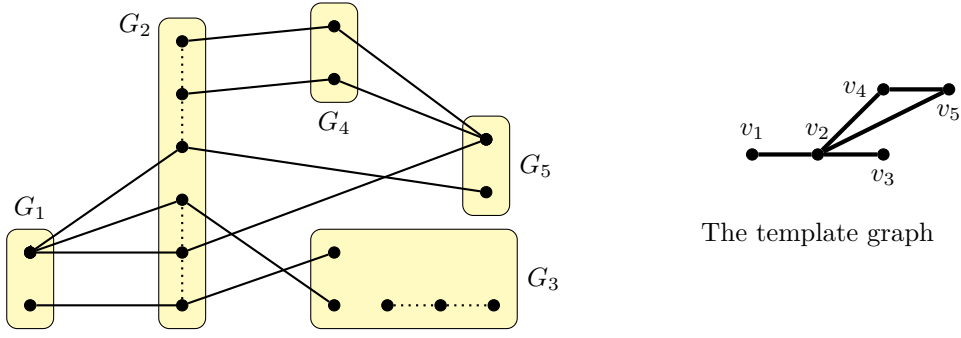

\textbf{Mixed integer linear program}
\begin{equation*}
\begin{aligned}
& \text{minimize}
& & \sum_{v \in V} r_i & \\
& \text{subject to}
& & r_v = w_v + t\cdot y_v - \sum_{S\in\mathcal{S}: S\ni v} x_S & \forall v\in V \\
& & & y_v \le p_v & \forall v\in V \\
& \text{where}
& & x_S\in\N & \forall S\in\mathcal{S} \\
& & & y_v\in\N & \forall v\in V \\
& & & r_v\ge 0 & \forall v\in V \\
\end{aligned}
\end{equation*}

The total number of integral variables may be upper-bounded by $|T| + |T|^{|H|},$ so if we denote by $n$ the size of the template graph (and thus the modular-width of an input graph) the upper bound can be expressed as $n + n^n.$ Now we can apply the result of Lenstra~\cite{lenstra83} (with an enhancement due to Frank and Tardos~\cite{FT87}):

\begin{proposition}
Let $p$ be the number of integral variables in a Mixed integer linear program and let $L$ be the number of bits needed to encode the program. Then it is possible to find an optimal solution in time $\BigO(p^{2.5p}\poly(L))$ and a space polynomial in $L.$
\end{proposition}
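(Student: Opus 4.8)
This proposition is Lenstra's theorem on integer linear programming in a fixed number of integer variables, in the sharpened form obtained by Kannan and by Frank and Tardos; the plan is to recall the structure of its proof. First I would reduce the optimization problem to a polynomial number of feasibility tests and eliminate the continuous variables. Binary search on the value of the objective function turns minimization into $\BigO(\poly(L))$ feasibility queries, each adding a single inequality of $\poly(L)$ bit-size. Given a feasibility instance over integer variables $x\in\mathbb{Z}^{p}$ and continuous variables $z$, let $P$ be the associated rational polyhedron; its projection $P'=\{x : \exists z,\ (x,z)\in P\}$ onto the integer coordinates is again a rational polyhedron whose facet description has bit-size polynomial in $L$, and $P$ contains a point with integral $x$-part iff $P'\cap\mathbb{Z}^{p}\neq\emptyset$. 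Hence it suffices to decide $P'\cap\mathbb{Z}^{p}\neq\emptyset$ for a rational polyhedron $P'\subseteq\mathbb{R}^{p}$ and, when the answer is yes, to exhibit such a point.

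The core is solving this $p$-dimensional lattice-point problem in time $p^{\BigO(p)}\poly(L)$, which I would do by the Lenstra recursion on $p$. Using the ellipsoid method, compute concentric ellipsoids $E\subseteq P'\subseteq \gamma_{p} E$ with $\gamma_{p}$ depending only on $p$; after the affine map sending $E$ to a ball and an LLL-reduced basis of the image of $\mathbb{Z}^{p}$, we may assume $P'$ lies between two concentric balls whose radii differ by a factor depending only on $p$, over the standard lattice. Now apply Khinchine's flatness theorem: either the inner ball is large enough to force $P'$ to contain a lattice point, which we then locate by rounding, or there is a primitive $c\in\mathbb{Z}^{p}$ with $\max_{x\in P'}c^{T}x-\min_{x\in P'}c^{T}x\le g(p)$ for a function $g$ of $p$ alone. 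In the flat case every lattice point of $P'$ lies on one of at most $g(p)+1$ parallel hyperplanes $c^{T}x=k$, and each intersection $P'\cap\{c^{T}x=k\}$ is a rational polyhedron of affine dimension $p-1$ on which we recurse after an integral change of coordinates identifying the hyperplane with $\mathbb{Z}^{p-1}$. The recursion tree has depth $p$ and branching at most $g(p)+1$, hence $\bigl(g(p)+1\bigr)^{p}=p^{\BigO(p)}$ leaves; each node performs only $\poly(L)$ arithmetic (LLL and the ellipsoid method are polynomial), and a careful accounting of the basis-reduction cost tightens the exponent to $2.5p$. Since the tree is explored depth-first, only $\BigO(p)$ polyhedra of polynomial bit-size are in memory at any time, giving space polynomial in $L$.

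It remains to keep $L$ under control across the recursion, which is the role of the Frank--Tardos preprocessing: before starting, replace each coefficient vector in the description of $P'$ by an integral vector with entries of bit-length polynomial in $p$ that preserves the sign of every relevant linear form, via their simultaneous Diophantine approximation lemma. This prevents the bit-size from blowing up when we intersect with hyperplanes and change coordinates, so the $L$ entering the per-node cost stays bounded by a fixed polynomial, which is what yields the clean $\BigO(p^{2.5p}\poly(L))$ bound rather than one in which $L$ appears in the exponent of $p$.

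The main obstacle is the flatness step: establishing Khinchine's flatness theorem with an explicit width bound $g(p)$ depending only on $p$, together with the lattice-basis reduction that makes $P'$ sufficiently round for the theorem to apply; this is the only genuinely non-elementary ingredient. Everything else — the binary search, projecting away the continuous variables, and counting the recursion — is routine polyhedral bookkeeping.
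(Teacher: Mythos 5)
The paper does not prove this proposition at all: it is imported verbatim as a known result, cited to Lenstra and to Frank and Tardos, and used as a black box to solve the mixed integer program of bounded integral dimension. So there is no in-paper argument to compare against; what you have written is a reconstruction of the cited literature, and as such it is an accurate outline of the standard proof (binary search to reduce optimization to feasibility, elimination of the continuous variables, ellipsoid rounding plus LLL, Khinchine flatness, recursion on the dimension, and Frank--Tardos preprocessing to keep coefficient sizes polynomial in $p$). Two points deserve a caveat. First, the projection $P'$ of $P$ onto the integer coordinates can have exponentially many facets (Fourier--Motzkin blowup), so you cannot literally pass to ``the facet description of $P'$ of bit-size polynomial in $L$''; the correct statement is that each individual facet has polynomial bit-size and that a polynomial-time separation oracle for $P'$ (solve the LP with the integer part fixed) suffices for the ellipsoid-based rounding, which is how Lenstra actually handles the mixed case. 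Second, the exponent $2.5p$ is not obtained by ``careful accounting'' of the basis-reduction cost in the naive Lenstra recursion; it is Kannan's structurally different algorithm (reducing to a search over a lower-dimensional sublattice with a stronger reduced basis) that brings the leaf count down from $g(p)^{p}$ with a large $g$ to $p^{O(p)}$ with the stated constant. Neither caveat affects the truth of the proposition, and since the paper itself treats the statement as external, your sketch is an adequate substitute for the missing proof.
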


\subsection{Refuting polynomial kernels}
In this section we prove Theorem~\ref{thm:IHPKernelRefuteMW}. First observe that it suffices to prove the theorem only for connected graph $H.$ This trivial as (as for clique-width) for every graph $G$ it holds that $\mw(G) = \mw(\bar{G})$---and thus we can ask the question in complement setting.

As a polynomial equivalence relation we take the following relation. Two instances $(G_1, H_1), (G_2,H_2)$ are equivalent if $|G_1| = |G_2|$ (that is they have the same number of vertices) and if $H_1 \simeq H_2.$ It is easy to see that this defines a polynomial equivalence relation (together with the class of malformed instances---instances that do not encode graph).

Now observe that if we take a disjoint union of two graphs $G,H$ then $\mw(G\mathbin{\dot{\cup}} H) \le \max\{|G|, |H|\}.$ This is not hard to see as we can take $G$ to be a type graph for $G$ and similarly $H$ to be a type graph for $H.$ Then the disjoint union does not change the modular-width as the second operation of modular decomposition is exactly the disjoint union and does not change the width of the decomposition. So by an inductive argument for graphs $G_1,G_2,\dots,G_n$ it holds that $\mw(G_1 \mathbin{\dot{\cup}} G_2 \mathbin{\dot{\cup}} \cdots \mathbin{\dot{\cup}} G_t) \le \max_{1\le i\le t}\{|G_i|\}.$

As the designed polynomial equivalence relation assures that for all $i$ and $j$ the graphs $H_i\simeq H_j$ we write $H$ to be the common graph for all instances.
A cross-composition of equivalent instances $(G_1,H),(G_2, H),\dots, (G_t,H)$ we take as an instance $G$ simply the disjoin union of all these graphs. Now this new instance $(G,H)$ has partition into copies of $H$ if and only if all instances $(G_i, H)$ have this partition. And the previous paragraph shows that $\mw(G)\le\max_{1\le i \le t}\{|G_i|\}.$
And thus this finishes the design of AND-distillation and the proof of Theorem~\ref{thm:IHPKernelRefuteMW} for all graphs $H$ for which the \InducedHPartition problem is \NP-hard.

\subsection{\InducedHPartition on graphs with bounded neighborhood diversity}

In this section we will present a proof of Theorem~\ref{thm:PartitionNeighborhoodDiversity}. We begin by with a lemma that allows us to focus on simpler and highly structured instances. The rest of the proof is by bounding the number of possibilities of embedding a graph $H$ into $G$ and finally using a integer linear programming for finding the solution for the \InducedHPartition problem.

Note that here it is possible that the embedding of a graph $H$ into a graph $G$ does not have to obey the neighborhood diversity decomposition---it is possible that for example a clique type of $H$ may be embedded among several clique (or even independent) types of $G.$ For an example of such a situation see Figure~\ref{fig:embeddingNDintoND}.

The proof is very similar to the proof of Lemma~\ref{lem:Hembedding}.

\begin{lemma}\label{lem:EmbeddingNDintoND}
Let $G, H$ be graphs such that it is possible to partition $G$ into copies of $H$. Then $\nd(H)\le\nd(G).$
\end{lemma}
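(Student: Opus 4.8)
The plan is to mimic the structure of the proof of Lemma~\ref{lem:Hembedding}: show that a neighborhood-diversity decomposition of $G$ induces one of $H$, on at most as many classes. Suppose $G$ is partitioned into copies $V_1,V_2,\dots,V_n$ of $H$, fix one copy $V_1$, and let $f:W\to V_1$ be an isomorphism from $H=(W,F)$ onto $G[V_1]$. Let $P_1,P_2,\dots,P_k$ be a partition of $V(G)$ into $k=\nd(G)$ neighborhood types. The natural candidate partition of $W$ is obtained by pulling back along $f$: put two vertices $a,b\in W$ in the same class if $f(a)$ and $f(b)$ lie in the same type $P_j$ of $G$. This yields at most $k$ nonempty classes, so it remains to check that any two vertices placed in the same class are of the same neighborhood type \emph{in $H$}.

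The key step is this verification. Take $a\neq b\in W$ with $f(a),f(b)$ in the same type $P_j$ of $G$; we must show $N_H(a)\setminus\{b\}=N_H(b)\setminus\{a\}$. Since $f(a),f(b)$ share their neighborhood type in $G$, we have $N_G(f(a))\setminus\{f(b)\}=N_G(f(b))\setminus\{f(a)\}$. Now for any third vertex $c\in W\setminus\{a,b\}$, the vertex $f(c)\in V_1$ is distinct from both $f(a)$ and $f(b)$, so $f(c)\in N_G(f(a))$ iff $f(c)\in N_G(f(b))$; because $f$ is an isomorphism of $G[V_1]$ with $H$ and $f(a),f(b),f(c)$ all lie in $V_1$, adjacency in $G$ between these images is equivalent to adjacency in $H$ between the preimages, hence $c\in N_H(a)$ iff $c\in N_H(b)$. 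This holds for every $c\in W\setminus\{a,b\}$, which is exactly the statement $N_H(a)\setminus\{b\}=N_H(b)\setminus\{a\}$. Therefore the pulled-back partition is a valid neighborhood-type partition of $H$ with at most $k$ classes, giving $\nd(H)\le k=\nd(G)$.

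The main thing to be careful about is the handling of the pair $\{a,b\}$ itself: the equality of types in $G$ is only asserted after removing the mutual vertices $f(a),f(b)$, and symmetrically the desired conclusion in $H$ only concerns $N_H(a)\setminus\{b\}$ versus $N_H(b)\setminus\{a\}$, so the edge (or non-edge) $\{a,b\}$ is irrelevant and never needs to be controlled; the argument only ever tests third vertices $c$, and for those the ``$\setminus\{f(b)\}$'' and ``$\setminus\{f(a)\}$'' exclusions are vacuous. A second minor point is that we only use the existence of a single induced copy of $H$ in $G$, not the full partition — so in fact the proof shows the slightly stronger statement that $\nd(H)\le\nd(G)$ whenever $H$ is an induced subgraph of $G$ — but stating it for the partition case suffices here. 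No step is genuinely hard; the proof is a short transport-of-structure argument essentially identical in spirit to Lemma~\ref{lem:Hembedding}, just phrased via the neighborhood-type condition rather than via a template graph.
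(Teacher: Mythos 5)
Your proof is correct and rests on the same core observation as the paper's: vertices that share a neighborhood type in $G$ and both lie in one induced copy of $H$ also share a neighborhood type in $H$, so the type partition of $G$ pulls back to a valid type partition of $H$ with at most $\nd(G)$ classes. The paper packages this as an injective map between the type graphs $T_H$ and $T_G$ (non-injectivity would contradict primality of the optimal $T_H$), but the neighborhood computation you carry out explicitly is exactly what that injectivity claim relies on, so yours is essentially the same argument written out in full.
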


\begin{proof}[Proof of Lemma~\ref{lem:EmbeddingNDintoND}]
It suffices to prove that the partition embedding has to obey the neighborhood diversity decomposition of the graph $H.$ For the rest of the proof we fix one copy of $H$ and $G.$

To see this take the type graphs $T_G, T_H$ and assume an embedding of $H$ into $G.$ This is no more a function as now it is possible that for a single vertex of $T_H$ there are more vertices of $T_G$---we may choose arbitrarily among these vertices, so that we obtain an embedding of $T_H$ into $T_G$ as a function---we denote this embedding as $\varphi: V(T_H)\rightarrow V(T_G).$

Now assume that the function $\varphi$ is not injective. This is absurd as this would prove that the graph $T_H$ is not a prime graph.
\end{proof}

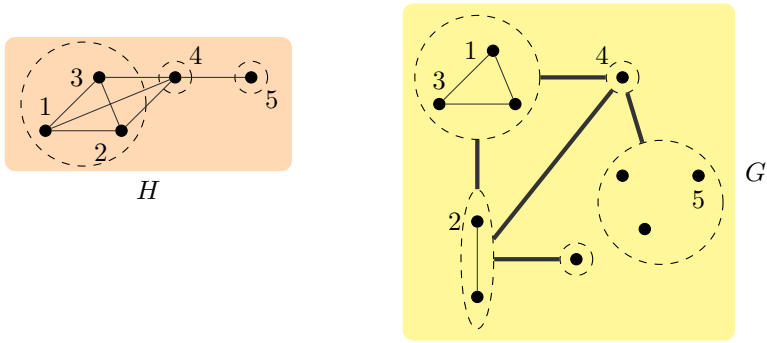
\begin{figure}
  \begin{tikzpicture}[node distance=1cm]
  \tikzstyle{vertex} = [circle, draw, fill=black, inner sep=1.5pt]
  \tikzstyle{underH} = [rounded corners, fill=orange!30, inner sep=3ex]
  \tikzstyle{underG} = [rounded corners, fill=yellow!50, inner sep=1ex]
  \tikzstyle{Ctypes} = [circle, dashed, draw, inner sep=2pt]
  \tikzstyle{Etypes} = [ellipse, dashed, draw, inner sep=2pt]
  \tikzstyle{edge} = [draw, black!80]
  \tikzstyle{macro_edge} = [ultra thick, draw, black!80]

  \begin{scope}[local bounding box=graphH]
    \node[vertex, label=above:$1$] (hv1) {};
    \node[vertex, right of=hv1, label=below left:$2$] (hv2) {};
    \node[vertex, above right of=hv1, label=left:$3$] (hv3) {};
    \node[vertex, right of=hv3, label=above right:$4$] (hv4) {};
    \node[vertex, right of=hv4, label=below right:$5$] (hv5) {};

    \draw[edge] (hv1) -- (hv2) -- (hv3) -- (hv1) -- (hv4) -- (hv5);
    \draw[edge] (hv2) -- (hv4) -- (hv3);
  \end{scope}
  \begin{scope}[on background layer]
    \node[fit=(hv1)(hv2)(hv3)(hv4)(hv5), label=below:$H$, underH] {};
    \node[Ctypes, fit=(hv1)(hv3)(hv2)] (ht1) {};
    \node[Ctypes, fit=(hv4)] (ht3) {};
    \node[Ctypes, fit=(hv5)] (ht4) {};
  \end{scope}

  \begin{scope}[local bounding box=graphG, shift={($(graphH.east) + (2cm, 0cm)$)}]
    \node[vertex, label=above:$3$] (gt1v1) {};
    \node[vertex, right of=gt1v1] (gt1v2) {};
    \node[vertex, above right of=gt1v1, label=left:$1$] (gt1v3) {};
    \draw[edge] (gt1v1) -- (gt1v2) -- (gt1v3) -- (gt1v1);
    \node[Ctypes, fit=(gt1v1)(gt1v2)(gt1v3)] (gt1) {};

    \node[vertex, below=of gt1, label=left:$2$] (gt2v1) {};
    \node[vertex, below of=gt2v1] (gt2v2) {};
    \draw[edge] (gt2v1) -- (gt2v2);
    \node[Etypes, fit=(gt2v1)(gt2v2)] (gt2) {};

    \node[vertex, right=of gt1, label=above left:$4$] (gt3v1) {};
    \node[Ctypes, fit=(gt3v1)] (gt3) {};

    \node[vertex, below=of gt3] (gt4v1) {};
    \node[vertex, right of=gt4v1, label=below:$5$] (gt4v2) {};
    \node[vertex, below left of=gt4v2] (gt4v3) {};
    \node[Ctypes, fit=(gt4v1)(gt4v2)(gt4v3)] (gt4) {};

    \node[vertex, right=of gt2] (gt5v1) {};
    \node[Ctypes, fit=(gt5v1)] (gt5) {};

    \draw[macro_edge] (gt1) -- (gt2) -- (gt3) -- (gt1);
    \draw[macro_edge] (gt4) -- (gt3);
    \draw[macro_edge] (gt2) -- (gt5);
  \end{scope}
  \begin{scope}[on background layer]
    \node[underG, fit=(gt1)(gt2)(gt3)(gt4)(gt5), label=right:$G$] {};
  \end{scope}
\end{tikzpicture}
  \caption{An example of embedding of a graph $H$ into a graph $G$---vertices are labeled $1,2,\dots,5$ in both to show an embedding. The types in graphs are indicated by dashed circles around a group of vertices. Bold edges between types represent a complete bipartite graph.}\label{fig:embeddingNDintoND}
\end{figure}

Now it is easy to see that the algorithm for the \InducedHPartition problem we may assume that $\nd(H)\le\nd(G)$ and that $H$ is a connected graph.
This allows us to simply build an integer linear program for this problem.

As the graph $H$ is fixed in our setting the number of vertices $n_H$ of $H$ is not a part of the input. And so one particular embedding of $H$ into $G$ can be described by specifying a type to which a particular vertex of $H$ is mapped. So there are at most $\nd(G)^n_H$ possibilities of embedding $H$ into $G.$ Thus the straightforward integer linear program has bounded dimension---this finishes the proof of Theorem~\ref{thm:PartitionNeighborhoodDiversity}.

\section{Connected graph partition problem is \texorpdfstring{MSO$_2$}{MSO2} definable}
\label{sec:connGIPisMSO2}
In this section we show that for a fixed connected graph $H$ can be described by a MSO$_2$ formula. Even though this is not obvious at the first sight. The straightforward expression by a formula seems to operate with copies of $H$ inside $G$---but this number cannot be bounded in terms of $\tw(G)$ and $|H|.$

We will proceed as follows. We will describe the solution to the $\InducedHPartition$ problem as a property of a set of edges which are in the solution. This approach is not expressible in MSO$_1.$
We describe the property of being solution as ``every connected part of the solution is isomorphic to $H$''. This we express by discovering the copy of $H$ from a particular (fixed in advance) vertex in $H.$
So we have to start with identifying this vertex.

In the following the set of edges $F$ (of a graph $G$) is the desired solution of the \InducedHPartition problem.

First of all we would like to give an evidence that it is not possible to express the \InducedHPartition problem by an MSO$_1$ formula. Roughly speaking about the expressive power of MSO$_1$ logic it is impossible to distinguish between two large cliques~\cite{LibkinFMT}---namely for every MSO$_1$ formula $\varphi$ there is a positive integer $N$ such that it is impossible to distinguish two cliques $K_{N}$ and $K_{N+1}.$ If we build two graphs $G_1, G_2$ as $G_1 = K_{N-1}\cup K_{N+1}$ and $G_2 = K_{N}\cup K_{N}$ for large enough $N$ that is divisible by $3$ the framework of Ehrenfeucht--Fraïssé games graphs $G_1$ and $G_2$ give that the \TrianglePartition problem is not expressible in MSO$_1$ logic. It is possible to generalize this ideal to other graphs as well.

\subsection{Identifying a vertex in $H$}
For two distinct vertices $u,v$ in a graph $H,$ the distance $d_H(u,v)$ denotes the least length of an $u-v$-path in the graph $H.$
By a {\em diameter} of a graph $H = (V,E)$ we denote the number $\diam(H) = \max_{u,v\in V} d_H(u,v).$

Let $F$ be the assumed solution.
For a fixed graph $H$ given a vertex $u$ of a graph $G$ it is possible to find a particular vertex $v$ in some connected component of $F$ by an MSO$_2$ formula 
$$
\varphi_{u,v} := (\forall u\in V,\,\exists v\in V)(\exists e_1,e_2,\dots,e_d\in F)(u\in e_1\wedge e_i \cap e_{i+1}\neq\emptyset\wedge v\in e_d),
$$
where $d = \diam(H).$ Note that our (sub)formula allows us do repeat edges and that we need at most $\diam(H)$ edges to find a desired vertex from any vertex inside the graph $H.$

\subsection{Recognition of a copy of $H$ from a particular vertex $v$}
We now may assume that $v$ is a vertex of $H$ that was chosen in advance and fixed. We will proceed by identifying all vertices of a copy of $H$ that contains $v.$ Then we may inscribe that all edges and non-edges describing a copy of $H$ are present. 

$$
\varphi_{v,H} := (\exists u_1,u_2,\dots u_{|H|})(u_1 = v\wedge
\bigwedge_{\{u_i,u_j\}\in E(H)} \{u_i,u_j\}\in F
\bigwedge_{\{u_i,u_j\}\notin E(H)} \{u_i,u_j\}\notin F)
$$

Here the trick is that we may assume a particular enumeration of vertices of $H$ chosen in advance. We identify the set of vertices of a copy of $H$ as $(\exists U := \bigcup_{i = 1}^{|H|} u_i)$ in the following.
The last thing is to show that no other vertex is connected to those forming this particular copy of $H$ inside the solution $F.$

$$
\varphi_{F,H} := (\forall e\in F\colon (\exists x,y\in U\colon x\in e\,\&\,y\in e)\vee\forall x\in U x\notin e)
$$

The final formula may be written as
$$
\varphi_H := (\exists F)(\varphi_{u,v}\wedge\varphi_{v,H}\wedge\varphi_{F,H}).
$$

This finishes the proof, as $\varphi_H$ is an MSO$_2$ formula and so is testable in \FPT time (for a fixed connected graph $H$) on graph with bounded treewidth by Courcelle's theorem~\cite{Courcelle90}. This finishes the proof of Theorem~\ref{thm:PartitionMSOFormula}.

\section{Conclusions}
In this section we enclose our paper. We have studied the \InducedHPartition problem from the parametrized point of view, but there are some open problems to which we would like to give a brief description. We divide the section according to parameters having the leading role in the posted open problems.

A first and the most natural question is to give a definition of other well-known problems as the \InducedHPartition problem for a fixed graph $H$ (or for tuples of $H$'s or even for some classes of graphs).

\subsection{General graphs}
Another very interesting question is what is the complexity of a similar problem to the \InducedHPartition problem---we can drop the induced condition. We call it the \HPartition problem. We would like to ask the question about the (parameterized) complexity of this problem.

For graphs $G = (V,E),H=(W,F)$ with $|V| = |W|\cdot n,$ we say that it is possible to {\em partition $G$ into non-induced copies of $H$} if there exist sets $V_1,V_2,\dots,V_n$ such that
\begin{itemize}
  \item $|V_i| = |W|$ for every $i = 1,2,\dots, n,$ 
  \item $\bigcup_{i = 1}^n V_i = V,$ and
  \item $H \simeq G_i' \subseteq G[V_i]$ for every $i = 1,2,\dots, n.$
\end{itemize}


\begin{prob}{\HPartition}
FIXED: & Template graph $H.$ \\
INPUT: & Graph $G$  with $|G| = n\cdot|H|$ for an integer $n.$ \\
QUESTION: & Is there a partition of $G$ into non-induced copies of $H?$ \\
\end{prob}

Note for example that the \HPartition problem is trivial for graphs without edges---that is in the case $H\simeq k\cdot K_1$ for some positive integer $k$---i.e. the answer is always ``Yes''. On the contrary in the case of $H\simeq K_k$ both problems coincides and thus in this case the \HPartition problem is \NP-complete. What is the overall picture---that is it possible to identify a property (all properties) a graph $H$ has to fulfill so that the \HPartition problem admits a polynomial-time algorithm?

\subsection{Sparse graphs}
We have shown in Section~\ref{sec:connGIPisMSO2} that it is possible to express the solution to the \InducedHPartition problem (for a connected graph $H$) by an MSO$_2$ formula. We conjecture that
\begin{conjecture}
For every disconnected graph $H$ the \InducedHPartition problem is $\mathsf{W}[1]$-hard when parameterized by the tree-width of the input graph $G.$
\end{conjecture}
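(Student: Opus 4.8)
The plan is to prove the conjecture by a parameterized reduction from \EquitableColoring, which is $\mathsf{W}[1]$-hard parameterized by the tree-width together with the number of colors~\cite{Fellows11:Colorful}. It suffices to turn an instance $(G_0,k)$ of \EquitableColoring into a graph $G$ that is partitionable into copies of $H$ if and only if $G_0$ has an equitable $k$-coloring, while keeping $\tw(G)=\BigO\!\bigl((\tw(G_0)+k)\cdot|H|\bigr)$; this is a legitimate parameterized reduction and establishes $\mathsf{W}[1]$-hardness of \InducedHPartition parameterized by $\tw(G)$ alone. The conceptual reason to expect this to work---and the reason Theorem~\ref{thm:PartitionMSOFormula} stops at connected $H$---is that a copy of a disconnected graph $H=H_1\mathbin{\dot{\cup}}\cdots\mathbin{\dot{\cup}}H_c$ need not be localized: its components can occupy parts of $G$ that are arbitrarily far apart in any tree decomposition, so a single copy of $H$ induces a long-range dependency that a bounded-width dynamic program cannot follow. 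We want to use exactly this long-range linking to transport the global balance condition of an equitable coloring.

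I would first carry out the construction for the model case $H=c\cdot K_1$ with $c\ge 3$; here \InducedHPartition is the problem of partitioning $V(G)$ into independent sets of size exactly $c$, and the disconnectedness is what lets a single part pick up one vertex from each of $c$ far-apart places in $G$ (impossible for the connected relative $K_c$, consistent with Theorem~\ref{thm:PartitionMSOFormula}). Given $(G_0,k)$, put into $G$ the vertex set of $G_0$ and, for every color $i\in\{1,\dots,k\}$, a \emph{class gadget} $C_i$ of bounded path-width that can be completed to a partition into copies of $c\cdot K_1$ precisely when it is supplied with exactly $q:=|V(G_0)|/k$ further ``input'' vertices, and not otherwise; each vertex $v\in V(G_0)$ is attached, through a bounded-size interface, to the class gadgets so that it must be consumed as an input of exactly one $C_i$, and the edges of $G_0$ are added so that two adjacent vertices of $G_0$ cannot be fed to the same $C_i$. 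Crucially the gadgets have to be \emph{self-correcting}: for every admissible residue the remainder of a gadget still partitions into copies of $H$, so that only the intended color assignment survives. Gluing the bounded interfaces of all $k$ class gadgets into every bag of a tree decomposition of $G_0$ then yields the promised bound on $\tw(G)$.

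To reach an arbitrary disconnected $H$ I would let $H_1$ be a component of smallest order and use copies of $H_1$ as the mobile pieces that realize the long-range linking, exactly as single vertices do in the $c\cdot K_1$ case, while the remaining components $H_2,\dots,H_c$ of each copy of $H$ are pinned inside rigid local sub-instances of $G$ that, by a finite case analysis on the induced subgraphs of $H$, admit essentially a unique copy of the component in question; the only freedom left when assembling a copy of $H$ is then where its $H_1$-part lands, and the argument of the previous paragraph applies.

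The hard part will be the dimension mismatch between the two problems: an equitable $k$-coloring has $k$ \emph{large} classes, whereas a partition into copies of $H$ has \emph{many} parts of the fixed size $|H|$, so the class gadgets must recycle a palette of only $k$ colors over unboundedly many parts while simultaneously being self-correcting and admitting a tree decomposition of width $\BigO(\tw(G_0)+k)$; reconciling these demands---in particular, gating each vertex of $G_0$ to a \emph{single} class gadget without creating a large biclique, and hence large tree-width---is the real content of the proof. Finally, the hypothesis should be read with $|H|\ge 3$: for $|H|\le 2$ the problem is polynomial, since \InducedHPartitionH{$K_2$} is \PerfectMatching and \InducedHPartitionH{$2K_1$} asks for a perfect matching in $\overline{G}$; it would also be wise to confirm hardness directly for the two smallest disconnected templates $3K_1$ and $K_2\mathbin{\dot{\cup}}K_1$ before attempting the general statement.
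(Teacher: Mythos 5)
The paper offers no proof of this statement: it is posed as an open conjecture, so there is nothing to compare your argument against and it must stand entirely on its own. As written it does not. The most serious problem is the mechanism by which you propose to transmit the coloring constraint in the model case $H=c\cdot K_1$. In a partition of $G$ into copies of $c\cdot K_1$ each part is an independent set of size $c$, so two adjacent vertices $u,v$ of $G_0$ are merely forbidden from lying in the \emph{same part}. Your encoding of ``$u$ and $v$ receive the same color'' is ``$u$ and $v$ are consumed by the same class gadget $C_i$,'' which places them in two \emph{different} parts, each drawing its remaining $c-1$ vertices from $C_i$. The edge $uv$ imposes no constraint whatsoever on that configuration, so retaining the edges of $G_0$ does not prevent adjacent vertices from being fed to the same gadget, and the reduction fails to enforce that color classes are independent sets. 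Some entirely different device is needed here, and it is far from clear that one exists within tree-width $\BigO(\tw(G_0)+k)$; this is not a detail to be filled in later but the heart of the matter.

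Even granting such a device, the remaining ingredients are asserted rather than constructed: no candidate is given for a bounded-path-width, ``self-correcting'' gadget that accepts exactly $q$ inputs while recycling one color over unboundedly many parts, and you yourself describe reconciling these demands as ``the real content of the proof,'' which is an admission that the proof is not there. The passage to general disconnected $H$ rests on an unperformed case analysis showing that the components $H_2,\dots,H_c$ can be rigidly pinned so that only the $H_1$-part is mobile; already for $H=K_2\mathbin{\dot{\cup}}K_1$ this looks delicate. On the positive side, your observation that the conjecture is false as literally stated is correct and worth recording: \InducedHPartitionH{$2K_1$} is \PerfectMatching in $\overline{G}$, hence polynomial-time solvable, hence in \FPT and not $\mathsf{W}[1]$-hard unless $\mathsf{W}[1]$ collapses to \FPT. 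Any eventual proof must therefore exclude at least the case $|H|\le 2$, and your instinct to first settle $3K_1$ and $K_2\mathbin{\dot{\cup}}K_1$ is the right one.
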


As we have mentioned the \EquitableColoring problem fits well in our setting of partitioning of a graph. This problem was shown to be \W[1]-hard by Fellows et al.~\cite{Fellows11:Colorful} when parameterized by the tree-width of the input graph. A strengthening of this fact---the \EquitableColoring problem is \W[1]-hard with respect to tree-depth
was proven by Gajarský et al.~\cite{GLO13}. We would like to ask a question concerning these graph models, namely the tree-depth. Is there a disconnected graph $H$ such that the \InducedHPartition problem is \FPT with respect to the tree-depth of these parametrizations (different from $2\cdot K_1$)?

\subsection{Dense graphs}
We have presented in Section~\ref{sec:IHPforMWGraphs} an algorithm for a fixed graph $H$ from a certain class of graphs. Is it possible to extend this result to a broader class of graphs $H$? Most important form this point of view seem graphs on $3$ vertices---a path $P_3$ and a triangle $K_3$ (the rest of $3$ vertex graphs would be resolved using complements).

For all these graphs $H = P_3$ or $H = K_3$ the neighbourhood diversity is lesser than the number of vertices---so the structure of smaller solutions seems to be very important (is it possible to keep all/important solutions?).

Another important task in this area is to understand the boundary (viewed from the parameterized complexity point of view) between modular-width and neighborhood diversity, twin-cover and clique-width. We hope that our knowledge in this area can be extended in the highlight of the \InducedHPartition problem---namely we should identify graphs $H$ with the property that on one parameter the problem is fixed parameter tractable while on the other hand it is \W[1]-hard on some other parameter (higher in the parameter hierarchy).

Finally, our techniques from Lemma~\ref{lem:EmbeddingNDintoND} showed that the \InducedHPartition problem admits an \FPT algorithm when the graph $H$ is a prime graph even when the graph $H$ is a part of the (restricted) input. More generally, there is an \FPT algorithm if we extend the graph class by allowing constant number of vertices inside every type of the graph $H.$
A is it possible to show an \FPT algorithm parametrized by neighborhood diversity of graph $G$ that takes the graph $H$ as input?

Unlike neighborhood diversity, for which we have proven the \InducedHPartition problem has an \FPT algorithm, nothing besides $H$ being a prime graph (which is implied by the Theorem~\ref{thm:PartitionModularWidth}) is known with parametrization by the twin-cover. It is an interesting question (similar as for modular-width) whether there is a graph $H$ that distinguishes neighborhood diversity from twin-cover from the perspective of the \InducedHPartition problem.

Our results give possibilities for parametrized algorithms with respect to clique-width---namely for the class of prime graphs. Here we would like to propose a concrete question for the \InducedHPartitionH{$P_5$}---is there an \FPT algorithm for this problem with respect to clique-width?

\bibliography{main}


\end{document}